%
%
%

\documentclass[12pt]{article}
\usepackage[utf8]{inputenc}



%
\usepackage{makeidx}         
\usepackage{graphicx}        
\usepackage{multicol}        
\usepackage{multirow}
\usepackage[bottom]{footmisc}
\usepackage{geometry} 
\geometry{margin=2cm}
\usepackage{amsmath, amssymb, amsthm}
\usepackage{psfrag,epsf}
\usepackage{enumerate}
\usepackage{url} 
\usepackage{hyperref}
\usepackage{booktabs}
\usepackage{xcolor}
\usepackage{natbib}
\usepackage[italian,english]{babel}
\usepackage{authblk}
\usepackage{changes}
\usepackage{subcaption}
\usepackage{placeins}
\usepackage{pdfpages}

\hypersetup{
			colorlinks=true,
			linkcolor=black,
            linktoc=page,
			anchorcolor=black,
			citecolor=blue,
			urlcolor=black,
}

\def \bs{\mathbf}

\newtheorem{proposition}{Proposition}

\theoremstyle{definition}
\newtheorem{definition}{Definition}

\def\0{\mbox{\bf{0}}}
\def\bs{\mathbf{s}}\def\be{\mathbf{e}}

\usepackage[toc,page]{appendix}

\def \be{\begin{equation}}
\def \ee{\end{equation}}
\def \ber{\begin{eqnarray}}
\def \eer{\end{eqnarray}}
\def \berr{\begin{eqnarray*}}
\def \eerr{\end{eqnarray*}}
\def \bqmatrix{\begin{bmatrix}}
\def \eqmatrix{\end{bmatrix}}

\def \be{\begin{equation}}
\def \ee{\end{equation}}
\def \ber{\begin{eqnarray}}
\def \eer{\end{eqnarray}}
\def \berr{\begin{eqnarray*}}
\def \eerr{\end{eqnarray*}}
\def \bamatrix{\begin{pmatrix}}
\def \eamatrix{\end{pmatrix}}
\def \bqmatrix{\begin{bmatrix}}
\def \eqmatrix{\end{bmatrix}}

\def \bs{\boldsymbol}

\def \qmo{``}

\def \qmcsp{'' }



\begin{document}
\title{\bf }
\author[1]{Beatrice Foroni}
\affil[1]{Department of Management and Economics, University of Pisa}
\author[2]{Luca Merlo}
\affil[2]{Department of Human Sciences,  Link Campus University of Rome}
\author[3]{Lea Petrella}
\affil[3]{MEMOTEF Department, Sapienza University of Rome}

\title{
Hidden Markov graphical models with state-dependent generalized hyperbolic distributions}

\maketitle

\begin{abstract}
In this paper we develop a novel hidden Markov graphical model to investigate time-varying interconnectedness between different financial markets. To identify conditional correlation structures under varying market conditions and accommodate stylized facts embedded in financial time series, we rely upon the generalized hyperbolic family of distributions with time-dependent parameters evolving according to a latent Markov chain. We exploit its location-scale mixture representation to build a penalized EM algorithm for estimating the state-specific sparse precision matrices by means of an $L_1$ penalty. The proposed approach leads to regime-specific conditional correlation graphs that allow us to identify different degrees of network connectivity of returns over time. The methodology's effectiveness is validated through simulation exercises under different scenarios. 
 In the empirical analysis we apply our model to daily returns of a large set of market indexes, cryptocurrencies and commodity futures over the period 2017-2023.
\end{abstract}
\noindent%
{\it Keywords: Cryptocurrencies, EM algorithm, Generalized Hyperbolic family, Financial networks, Hidden Markov models, Penalized likelihood}
\vfill


\section{Introduction}\label{sec:intro}
\noindent{The financial system} is a complex, dynamic and interconnected world. Observing the extreme financial integration in the recent global crisis, it was soon noted the crucial importance to identify how the impact of stress events can spread across the whole financial global system. In the last years, the development of statistical techniques to accurately quantify and investigate the interrelations among financial institutions has been at the centre of the attention not only of investors and fund managers, but also of regulators for early identification of systemic risk and proactively engaging preventing measures to control financial stability (\citealt{silva2017analysis, silva2018bank, giudici2018corisk, brunetti2019interconnectedness}). For this reason, network science has emerged as a useful tool for describing the propagation of systemic risk, where the interconnectedness between markets is represented by a graph whose nodes stand for companies, commodities, institutions, for instance, and edges represent their interactions.
 In this context, Gaussian Graphical Models (GGMs) have received an enormous attention because they provide a simple method to model the pair-wise conditional correlations of a collection of stochastic variables. As it is well known, for normally distributed data, the underlying conditional dependence structure is completely characterized by the inverse of the covariance matrix, also known as precision or concentration matrix, of the corresponding GGM (see \citealt{lauritzen1996graphical} for a general background). Thus parameter estimation in GGMs is a fundamental issue for the identification of the zero entries in the concentration matrix. 
When dealing with large dimensional problems, we are interested in identifying only the variables that exhibit the most relevant and strongest connections. Among the several graphical methods proposed in literature (\citealt{dempster1972covariance, drton2004model, banerjee2008model, drton2008sinful, cai2011constrained, liu2012high, liu2017tiger}), there exists the popular and computationally efficient \textit{Graphical Lasso} (\textit{glasso}) algorithm of \cite{friedman2008sparse}, which maximizes the likelihood of the model penalized by the $L_1$-norm of the elements of the precision matrix. \\
Moreover, financial time series are characterized by well-known stylized facts like fat tails, leptokurtosis and deviations from normality, that the analysis needs to properly take into account. Especially in the highly turbulent cryptocurrency market, risk managers and regulators are increasingly interested in determining whether, and how, the temporal evolution and volatility clustering of returns can be influenced by hidden variables, e.g., the state of the market, during tranquil and crisis periods.
 In this context, Hidden Markov Models (HMMs, see \citealt{macdonald1997hidden, zucchini2016hidden}) have been successfully employed in the analysis of financial time series data, with applications to asset allocation and stock returns as discussed in \cite{mergner2008time, de2013dynamic, nystrup2017long, maruotti2019hidden, pennoni2022exploring} and \cite{foroni2023expectile, foroni2023quantile}. 
In the context of undirected graphs, estimation of graphical models in HMMs has been addressed by \cite{stadler2013penalized} using multivariate Gaussian emission distributions with sparse precision matrices which can be interpreted as state-specific conditional independence graphs. More recently, \cite{bianchi2019modeling} introduced a Markov switching graphical seemingly unrelated regression model to investigate time-varying systemic risk based on a range of multi-factor asset pricing models. 
Nevertheless, \qmo non-standard\qmcsp features of returns cannot be accommodated by standard models such as those based on normality assumptions. Unfortunately, the literature regarding non-Gaussian graphical models is fairly limited. Potential modeling strategies could rely on semiparametric Gaussian copula models (\citealt{liu2012high, xue2012regularized}), such as the Nonparanormal model of \cite{liu2009nonparanormal}, or power transformations of the data. 
 Alternatively, \cite{finegold2011robust} have introduced a robust graphical model based on the multivariate t distribution called \textit{tlasso}. 

In this paper, we contribute to the existing literature by introducing a sparse hidden Markov graphical model to investigate time-varying conditional correlation structures in multivariate time series data, without assuming normally distributed returns. To build our network model, we consider multivariate symmetric generalized hyperbolic (GH, \citealt{mcneil2015quantitative}) distributions with time-dependent parameters evolving according to a discrete, homogeneous latent Markov chain. Within the financial literature, this family of densities has garnered significant attention for describing pertinent features of the distribution of returns (\citealt{chen2008nonparametric, necula2009modeling, ignatieva2015estimating}) but also for its considerable flexibility in modeling financial data (\citealt{konlack2014comparison, zhang2019generalised}) which includes the multivariate Normal, t, Laplace and several others as particular cases (\citealt{mcneil2015quantitative, browne2015mixture}). 
Following \cite{finegold2011robust}, we demonstrate that, conditionally on each latent state, the inverse of the state-specific scale matrix of the multivariate GH completely characterizes the conditional correlation structure among the random variables. 
 To induce sparsity in the inverse of the scale matrices, i.e., the precision matrices, and identify whether two nodes are connected by an edge, we exploit the Gaussian location-scale mixture representation of the GH family to build a suitable penalized expectation-conditional maximization either (ECME) algorithm. This enables us to include in our estimation procedure the \textit{glasso} approach of \cite{friedman2008sparse} accounting for an $L_1$ penalty on the off-diagonal elements of the precision matrices in the M-step of the algorithm. We call this method \textit{hidden Markov generalized hyperbolic graphical model} (HMGHGM). 
 Within this scheme, our modeling framework allows us to construct a set of regime-specific graphs whose set of edges is determined by the non-zero elements of the estimated state-specific precision matrices. As opposed to GGMs, the proposed methodology has several advantages. Firstly, it enables us to estimate levels of network connectivity among asset returns in different market phases corresponding to different states of the latent process. Secondly, we don't rely on the restrictive assumption of normally distributed data, but instead each state-dependent GH distribution has shape parameters that are free to vary within the GH family to provide the best fit to the data.
 Using simulation exercises, we validate the ability of our method to correctly (i) recover the true values of the parameters under different states of the Markov chain, (ii) identify the true HMM clustering partition and (iii) retrieve the graphical model by recovering the true edges of the graphs.
 
Empirically, we analyze daily returns of a large set of financial assets, including the most important world stock market indices, commodity futures and the largest cryptocurrencies by market capitalization. In the last 10 years, the emergence of the cryptocurrency market has increasingly attracted the attention of market participants. The shortage of safe assets after the global financial crisis of $2008$ has raised several concerns among investors and researchers on whether digital currencies can offer hedging and safe-haven abilities for equity investments (\citealt{bouri2020cryptocurrencies}).
In the current literature, there are very few studies examining how traditional asset classes, (i.e., stocks, bonds and exchange rates), commodities and different cryptocurrencies interact with each other. The existing approaches have mainly focused on the relationships between a limited collection of cryptocurrencies and assets (\citealt{baur2018bitcoin, corbet2018exploring, ji2018network, bouri2020bitcoin, chen2020lead, giudici2021crypto}) or relied on conditional means to identify correlation structures (\citealt{bouri2017hedge}), which cannot provide a complete picture of the dependencies and the transmission path of risks between markets. Here, we implement the proposed HMGHGM to analyze the conditional correlation structure among the cryptocurrency, commodity and stock market sectors from 2017 to 2023 and evaluate how it may vary when considering different volatility clusters. During the period considered, there were significant episodes of prices fluctuations and instability, such as the explosion in cryptocurrencies at the start of 2018, the COVID-19 pandemic erupted in 2020, which have triggered unexpected levels of uncertainty and high volatility, and the EU sanctions to Russia dictated by the Ukrainian war. In this way, we are able to study how the degree of correlation changes as
different hidden volatility states are considered.

  To the best of our knowledge, this is the first attempt to build a non-gaussian hidden Markov graphical model for estimating time-varying cross-market conditional association structures of financial returns.\\ 
  
  The rest of the paper is organized as follows. In Section \ref{sec:model}, we briefly review the GH distribution and formally introduce the HMGHGM. Section \ref{sec:Est} proposes the ECME-based maximum likelihood approach and the related penalized algorithm for sparse estimation of the state-specific precision matrices. In Section \ref{sec:sim} we provide simulation results, while the empirical application is presented in Section \ref{sec:app}. Section \ref{sec:concl} summarizes our conclusions.

\section{Methodology}\label{sec:model}
In this section we introduce the hidden Markov generalized hyperbolic graphical model (HMGHGM). Before describing the model, we briefly revise the symmetric multivariate GH distribution, its location-scale mixture representation and its limiting cases. 
 Subsequently, we show how it is possible to build sparse state-specific graphical models for characterizing time-varying conditional correlation relations among variables.

\subsection{The generalized hyperbolic distribution and its special cases}\label{sub:GH}

Formally, let $\mathbf{Y}_{t} = [Y_{t}^{(1)}, \dots, Y_{t}^{(d)}]'$ denote a continuous $d$-dimensional random vector for $t = 1, \dots, T$. The joint probability density function of $\bs Y_t$ following the (symmetric) GH distribution can be written as
\begin{equation}\label{GH_dens}
	f_{\bs{Y}_t} (\bs y_{t}; \bs \mu, \bs \Sigma, \lambda, \chi, \psi) = \frac{1}{(2\pi)^{d/2}|\bs \Sigma|^{1/2}K_{\lambda}(\sqrt
	{\psi \chi})} \left[ \frac{\chi + \delta(\bs y_{t}; \bs \mu, \bs \Sigma)}{\psi} \right]^{\frac{\lambda - d/2}{2}} K_{\lambda - \frac{d}{2}} \left( \sqrt{\left[ \chi + \delta(\bs y_{t}; \bs \mu, \bs \Sigma) \right] \psi} \right)
\end{equation}
where $\boldsymbol{\mu} \in \mathbb{R}^d$ is the location parameter, $\bs { \Sigma}$ is a $d \times d$ positive definite and symmetric scale matrix, such that $|\bs \Sigma| = 1$ for identifiability purposes (see \cite{mcneil2015quantitative} for details), $\lambda \in \mathbb{R}$ is the index parameter, $\chi > 0$ and $\psi >0$ are concentration parameters, $\delta(\bs y_{t}; \bs \mu, \bs \Sigma) = (\bs y_{t} - \bs \mu)' \bs \Sigma^{-1} (\bs y_{t} - \bs \mu)$ is the squared Mahalanobis distance between $\bs y_{t}$ and $\bs \mu$ with scale matrix $\bs \Sigma$ and finally $K_{\lambda - \frac{d}{2}}(\cdot)$ denotes the modified Bessel function of the third kind of order $\lambda - \frac{d}{2}$. We adopt the compact notation $\bs{Y}_t \sim \mathcal{GH}_d(\bs \mu, \bs \Sigma, \lambda, \chi, \psi)$.
One of the key benefits of the GH distribution is that, using \eqref{GH_dens}, $\bs{Y}_t$ admits the following location-scale mixture representation:
\begin{equation}\label{mixtureGH}
\mathbf{Y}_t = \boldsymbol{\mu} + \sqrt{W_t} \bs {\Sigma}^{1/2}\bs Z_t
\end{equation}
where $\bs Z_t \sim {\cal N}_d(\bs 0_d, \bs I_d)$ denotes a $d$-variate standard Normal distribution whose covariance matrix is the identity matrix and $W_t$ has a generalized inverse Gaussian (GIG) distribution, $W_t \sim \mathcal{GIG}(\lambda, \chi, \psi)$, with $\bs Z_t$ being independent of $W_t$.
From \eqref{mixtureGH}, we can refer to the following hierarchical representation of $\bs{Y}_t \sim \mathcal{GH}_d(\bs \mu, \bs \Sigma, \lambda, \chi, \psi)$:
\begin{equation}
\begin{aligned}\label{GHrep2}
 	 W_t \sim & \ \mathcal{GIG}(\lambda, \chi, \psi), \\
 	  \bs{Y}_t|W_t = w_t \sim & \ {\cal N}_d(\bs \mu, w_t\bs \Sigma)
\end{aligned}
\end{equation}
which is useful for random data generation and for the implementation of our ECME algorithm.

The GH family encompasses several well-known, applied models for financial data by varying appropriately the values of the parameters of the GIG distribution, $(\lambda, \chi, \psi)$, including the multivariate Laplace, t and Normal distribution as presented in Figure \ref{fig:GHfam}. 
 
A complete taxonomy of all the models belonging to the class of GH densities can be found, for instance, in \cite{mcneil2015quantitative, browne2015mixture} and \cite{bagnato2023generalized}.

\begin{figure}[!h]
\centering
  \includegraphics[width=1\linewidth]{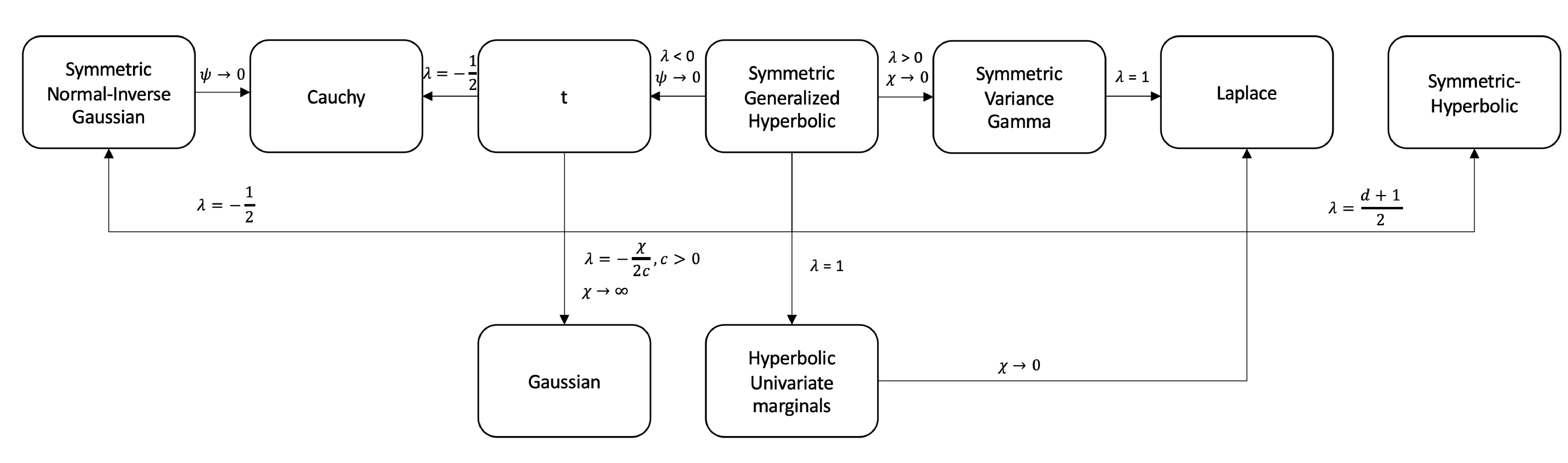}
\caption{Special and limiting cases of the symmetric GH distribution in terms of $\lambda, \chi$ and $\psi$.}
\label{fig:GHfam}
\end{figure}
\FloatBarrier

\subsection{Hidden Markov graphical models with state-specific GH densities}
In this section we describe the proposed hidden Markov graphical model with state-specific multivariate GH emission densities. 
\\
 Formally, let $\{ S_t \}_{t=1}^T$ be a latent, homogeneous, first-order Markov chain defined on the discrete state space $\{1,\dots,K\}$. Let $\pi_k = Pr(S_1=k)$ be the initial probability of state $k$, $k = 1,\dots,K$, and $\pi_{k\lvert j} = Pr(S_{t+1}=k \lvert S_t=j)$, with $\sum_{k=1}^K \pi_{k \lvert j} = 1$ and $\pi_{k\lvert j} \geq 0$, denote the transition probability between states $j$ and $k$, that is, the probability to visit state $k$ at time $t+1$ from state $j$ at time $t$, $j,k = 1,\dots,K$ and $t=1,\dots,T$. We collect the initial and transition probabilities in the $K$-dimensional vector $\bs \pi$ and in the $K \times K$ matrix $\bs \Pi$, respectively. 
\\
In our model we assume that the conditional distribution of $\bs Y_t$ given the state occupied by the latent process at time $t$ 
 corresponds to a GH distribution in \eqref{GH_dens} whose parameters depend on the value of the Markov chain $S_t$, namely $\bs Y_t \mid S_t = k \sim \mathcal{GH}_d(\bs \mu_k, \bs \Sigma_k, \lambda_k, \chi_k, \psi_k)$. We define with $\bs \Theta_k = \bs \Sigma_k^{-1}$ for $k=1,\dots,K$, the precision matrices used to build state-specific undirected graphs that conveys the conditional intercorrelation structure among the elements of $\bs Y_t$ given the latent process $S_t$. More precisely, supposing $S_t = k$, let $G_k = (V, E_k)$ be an undirected state-dependent graph where $V= \{1,\dots, d\}$ denotes the set of state-invariant nodes, such that each component of the random variable $\bs Y_t$ corresponds to a node in $V$, and $E_k \subseteq V \times V$ represents the set of undirected edges in the $k$-th state. In order to study the conditional correlation structure of $\bs Y_t$ within the $k$-th state through the graph $G_k$, we establish a useful result that allows us to make inference on the edge set $E_k$ based on the precision matrix $\bs \Theta_k$. 
Following \cite{finegold2011robust} and exploiting the mixture representation of the symmetric GH in \eqref{mixtureGH}, we can characterize the conditional correlation structure as shown in the following proposition.

\begin{proposition}\label{prop:cunc}
Consider $S_t = k$ and let $\bs {Y}_t \mid S_t = k \sim \mathcal{GH}_d(\bs \mu_k, \bs \Sigma_k, \lambda_k, \chi_k, \psi_k)$. If two nodes $j$ and $h$, with $j,h \in V$ and $j \neq h$, of the graph are separated by a set of nodes $C \in V$, then ${Y}_t^{(j)}$ and ${Y}_t^{(h)}$ are conditionally uncorrelated given $\bs {Y}_t^{(C)}$ and $S_t = k$. 
\end{proposition}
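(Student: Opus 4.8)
\emph{Proof strategy.} The plan is to deduce the statement from the classical global Markov property for Gaussian graphical models by first conditioning on the latent mixing weight $W_t$, and then integrating $W_t$ out via the tower property. The observation that makes the second step work is that, in the scale-mixture representation \eqref{GHrep2}, $W_t$ multiplies only the scale matrix, so it cancels in the affine Gaussian conditional-mean formula.

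First I would use the hierarchical representation \eqref{GHrep2}: conditionally on $S_t=k$ and $W_t=w_t$, the vector $\bs Y_t$ is $\mathcal{N}_d(\bs\mu_k, w_t\bs\Sigma_k)$, whose precision matrix $w_t^{-1}\bs\Theta_k$ has, for every $w_t>0$, exactly the same off-diagonal zero pattern as $\bs\Theta_k$; hence the conditional independence structure of this Gaussian law is encoded by the graph $G_k$ itself. Writing $A$ for the connected component of $j$ in the subgraph of $G_k$ induced by $V\setminus C$ and $B=(V\setminus C)\setminus A$ (where $j,h\notin C$, as is implicit), the hypothesis that $C$ separates $j$ from $h$ forces $h\in B$ and forces $C$ to separate $A$ from $B$ in $G_k$; the global Markov property for Gaussian graphical models (see \citealt{lauritzen1996graphical}) then yields $\bs Y_t^{(A)}\perp\bs Y_t^{(B)}\mid\bs Y_t^{(C)},W_t=w_t,S_t=k$, and in particular $\mathrm{Cov}(Y_t^{(j)},Y_t^{(h)}\mid\bs Y_t^{(C)},W_t=w_t,S_t=k)=0$ for every $w_t>0$.

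The remaining step is to remove the conditioning on $W_t$. By the tower property,
\[
\mathbb{E}\big[Y_t^{(j)}Y_t^{(h)}\mid\bs Y_t^{(C)},S_t=k\big]=\mathbb{E}\Big[\,\mathbb{E}\big(Y_t^{(j)}Y_t^{(h)}\mid\bs Y_t^{(C)},W_t,S_t=k\big)\,\Big|\,\bs Y_t^{(C)},S_t=k\Big],
\]
and the inner conditional expectation equals the (vanishing) conditional covariance above plus the product of the conditional means $\mathbb{E}(Y_t^{(j)}\mid\bs Y_t^{(C)},W_t=w_t,S_t=k)$ and $\mathbb{E}(Y_t^{(h)}\mid\bs Y_t^{(C)},W_t=w_t,S_t=k)$. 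The crucial point is that the Gaussian conditional mean
\[
\mathbb{E}\big(Y_t^{(j)}\mid\bs Y_t^{(C)},W_t=w_t,S_t=k\big)=\mu_k^{(j)}+\bs\Sigma_k^{(j,C)}\big(\bs\Sigma_k^{(C,C)}\big)^{-1}\big(\bs Y_t^{(C)}-\bs\mu_k^{(C)}\big)
\]
does not depend on $w_t$, since the factor $w_t$ scaling both blocks cancels; denote this $\sigma(\bs Y_t^{(C)})$-measurable random variable by $m_j(\bs Y_t^{(C)})$, and similarly $m_h(\bs Y_t^{(C)})$. Then $\mathbb{E}[Y_t^{(j)}Y_t^{(h)}\mid\bs Y_t^{(C)},S_t=k]=m_j(\bs Y_t^{(C)})\,m_h(\bs Y_t^{(C)})$, while a further application of the tower property gives $\mathbb{E}[Y_t^{(j)}\mid\bs Y_t^{(C)},S_t=k]=m_j(\bs Y_t^{(C)})$ and likewise for $h$; subtracting the product of these marginal conditional means shows $\mathrm{Cov}(Y_t^{(j)},Y_t^{(h)}\mid\bs Y_t^{(C)},S_t=k)=0$, which is the assertion.

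I expect the only delicate step to be the last one — establishing that the cross-moment factorizes once $W_t$ has been integrated out. This rests entirely on the mixing acting on the scale rather than on the location, together with the affine form of the Gaussian conditional expectation; once the $w_t$-independence of the conditional means is in hand, the rest is routine bookkeeping with the tower property. I would also append a short remark that the conclusion concerns conditional \emph{uncorrelatedness} and not conditional independence: unlike in the purely Gaussian case, separation in $G_k$ does not render $Y_t^{(j)}$ and $Y_t^{(h)}$ conditionally independent given $\bs Y_t^{(C)}$, because the shared mixing variable $W_t$ generally leaves higher-order dependence intact --- exactly the distinction stressed by \cite{finegold2011robust}.
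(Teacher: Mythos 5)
Your proof is correct, but it follows a genuinely different route from the paper's. The paper argues pairwise and in the precision parametrization: it conditions on \emph{all} remaining coordinates $b=V\setminus\{j,h\}$, notes that separation forces $\Theta_{j,h}=0$, writes the Gaussian conditional mean of $Y_t^{(j)}$ given $(\bs Y_t^{(h\cup b)},W_t)$ as $\mu^{(j)}-\Theta_{j,j}^{-1}\bs\Theta_{j,b}(\bs Y_t^{(b)}-\bs\mu^{(b)})$ — free of both $Y_t^{(h)}$ and $W_t$ — and then uses the tower property to conclude $E[Y_t^{(j)}\mid \bs Y_t^{(h\cup b)}]=E[Y_t^{(j)}\mid \bs Y_t^{(b)}]$, hence zero conditional covariance given $\bs Y_t^{(b)}$; the reduction from an arbitrary separating set $C$ to this full-complement case is only asserted ("it is sufficient to show"). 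You instead condition on the separating set $C$ directly, invoke the global Markov property of the Gaussian law of $\bs Y_t$ given $(W_t,S_t=k)$ (whose precision $w_t^{-1}\bs\Theta_k$ has the same zero pattern as $\bs\Theta_k$, cf.\ \citealt{lauritzen1996graphical}) to kill the conditional covariance given $(\bs Y_t^{(C)},W_t)$, and then integrate $W_t$ out via a law-of-total-covariance/cross-moment factorization, the key cancellation being that $w_t$ drops out of $\bs\Sigma_k^{(j,C)}\bigl(\bs\Sigma_k^{(C,C)}\bigr)^{-1}$. The shared engine is the same — scale-only mixing makes the conditional means $W_t$-free, so the tower property finishes the job — but your version treats an arbitrary separating set $C$ exactly as stated in the proposition, at the price of importing the global Markov property as a black box, whereas the paper's argument is more elementary and self-contained but explicitly covers only $C=V\setminus\{j,h\}$. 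Your closing remark that the conclusion is conditional \emph{uncorrelatedness} rather than conditional independence (the common mixing variable $W_t$ retains higher-order dependence) is accurate and consistent with the paper's reading of \citet{finegold2011robust}.
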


\begin{proof}
Without loss of generality, throughout the proof we omit the subscript $k$ indicating the latent state yet all the following equalities are conditional on $S_t = k$. To prove the result it is sufficient to show that ${Y}_t^{(j)}$ and ${Y}_t^{(h)}$ are conditionally uncorrelated given $\bs {Y}_t^{(V \backslash \{j,h \}) }$. Partition $V$ into $a=\{j,h \}$ and $b= V / \{j,h \}$. 
	For a given value of $W_t$ and given $\bs \Theta = \bs \Sigma^{-1}$:
	
\ber \label{prop1_proof}
\begin{aligned}
(\bs {Y}_t^{(a)} | \bs {Y}_t^{(b)},W_t) \sim  {\cal N}_2(& \bs{\mu}^{(a)} - \bs \Theta^{-1}_{a,a}\bs \Theta_{a,b}(\bs {Y}_t^{(b)} - \bs{\mu}^{(b)}), W_t \bs \Theta^{-1}_{a,a} )
\end{aligned}
\eer
	 and
\ber \label{prop1_proof2}
\begin{aligned}
({Y}_t^{(j)} | \bs {Y}_t^{(h\cup b)},W_t) \sim {\cal N}_1(&{\mu}^{(j)} - \Theta^{-1}_{j,j}\bs \Theta_{j,h \cup b}(\bs {Y}_t^{(h\cup b)} - \bs{\mu}^{(h\cup b)}), W_t \Theta^{-1}_{j,j}),
\end{aligned}
\eer
where $\bs \Theta_{a,b}$ is the submatrix of $\bs \Theta$ with rows and columns indexed by the sets $a$ and $b$. Since $\Theta_{j,h}=0$, 

\ber
\begin{aligned}
E[{Y}_t^{(j)} | \bs {Y}_t^{(h\cup b)},W_t] & = \mu^{(j)} - \Theta^{-1}_{j,j}\bs \Theta_{j,b} (\bs {Y}_t^{(b)} - \bs \mu^{(b)})\\ & = E[{Y}_t^{(j)} | \bs {Y}_t^{(b)},W_t]
\end{aligned}
\eer	 
for any value of $W_t$. Therefore,

\ber\label{eq:cond_mean}
\begin{aligned}
E[{Y}_t^{(j)} | \bs {Y}_t^{(h\cup b)}] & = E[ E[{Y}_t^{(j)} | \bs {Y}_t^{(h\cup b)},W_t] | \bs {Y}_t^{(h\cup b)}] = E[ E[{Y}_t^{(j)} | \bs {Y}_t^{(b)},W_t] | \bs {Y}_t^{(b)}]\\ & = E[{Y}_t^{(j)} | \bs {Y}_t^{(b)}]
\end{aligned}
\eer
which implies that ${Y}_t^{(j)}$ and ${Y}_t^{(h)}$ are conditionally uncorrelated given $\bs {Y}_t^{(b)}$.
\end{proof}

Based on \eqref{prop1_proof}, we can exploit the properties of GGMs to characterize conditional correlation relationships between the elements of of $\bs Y_t$ in each state $k=1,\dots,K$, estimating the precision matrices $ \bs \Theta_k$'s.
More formally, for each state and for any pair of nodes $j,h$, with $j \neq h$, $Y^{(j)}_t$ and $ Y^{(h)}_t$ are conditionally uncorrelated given $\bs {Y}_t^{(V \backslash \{j,h \}) }$ and $S_t = k$ if and only if the $(j,h)$-th element of the matrix $ \bs \Theta_k$, $\Theta_{k,j,h}$, is equal to 0. Hence, for each $k=1,\dots,K$, the edge set $E_k$ of the graph $G_k$ describing the distribution of $\bs Y_t \mid S_t = k$ is completely encoded by the matrix $\bs \Theta_k$ of the GH distribution, i.e., $(j,h) \in E_k$ if and only if $\Theta_{k,j,h} \neq 0$. The proposed methodology allows us to capture regime-specific conditional correlation structures for probability distributions within the GH family. 
 With respect to alternative strategies, our framework is a generalization of the graphical model of \cite{finegold2011robust} when $K = 1$ where the GH reduces to the multivariate t distribution. In addition, we encompass the graphical hidden Markov model of \cite{stadler2013penalized} when the state-specific GH emission densities reduce to the multivariate Normal distributions. 

 In this setting, it is therefore crucial to accurately determine the matrices $\bs \Theta_1, \dots, \bs \Theta_K$ for a correct interpretation of the graphs and to visualize the true interactions among the variables. In section \ref{sec:Est}, we introduce a maximum likelihood approach to estimate the model parameters and make inference on the sparsity pattern in the $\bs \Theta_k$'s.

\section{Estimation}\label{sec:Est}
 As shown in the previous sections, the location-scale mixture representation of the GH is a convenient tool to build a sparse graphical model. As is common for HMMs, and for latent variable models in general, we propose a suitable likelihood-based EM algorithm (\citealt{baum1970maximization}) to estimate the parameters of the method proposed based on the observed data. Since we are interested in detecting only the most important connections, we also propose a penalized version of the EM by considering the Lasso-type regularization of \cite{tib1996lasso} for sparse matrix estimation in high-dimensional settings where a large number of variables is available.  
 \subsection{The EM algorithm}\label{sub:EM}
 For a given number of hidden states $K$, the EM algorithm runs on the complete log-likelihood function of the model introduced, which is defined as	 
\be
\begin{aligned}\label{eq:completel}
\ell_c(\bs \theta) = \sum_{k=1}^K {\gamma_1}(k)\log\pi_k &+ \sum_{t=1}^T\sum_{k=1}^K \sum_{j=1}^K {\xi_t}(j,k) \log \pi_{k\lvert j} \\ &+ \sum_{t=1}^T\sum_{k=1}^K {\gamma_t}(k)\log f_Y(\bs y_t \lvert S_t = k),
\end{aligned}
\ee
where $\bs \theta = (\bs \mu_1, \dots, \bs \mu_K, \bs \Sigma_1, \dots, \bs \Sigma_K, \lambda_1, \dots, \lambda_K, \psi_1, \dots, \psi_K, \chi_1, \dots, \chi_K, \bs \pi, \bs \Pi)$ represents the vector of all model parameters, ${\gamma_t}(k)$ denotes a dummy variable equal to $1$ if the latent process is in state $k$ at occasion $t$ and 0 otherwise, and ${\xi_t}(j,k)$ is a dummy variable equal to $1$ if the process is in state $j$ in $t-1$ and in state $k$ at time $t$ and $0$ otherwise.\\
Unfortunately, the log-likelihood of the GH distribution does not yield any closed-form expression in the optimization process. Nevertheless, following \cite{chatzis2010hidden}, the issue can be resolved exploiting the data augmentation scheme of equation \eqref{GHrep2}. We exploit the conditional structure by writing
\be
\log f_Y(\bs y_t \lvert S_t = k) = l_1(\bs \mu_k, \bs \Sigma_k \lvert S_t = k) + l_2(\lambda_k, \chi_k, \psi_k \lvert S_t = k)
\ee
where 

\be\label{eq:l1}
l_1(\bs \mu_k, \bs \Sigma_k | S_t = k) = \sum_{t=1}^T \left[ -\frac{d}{2} \log(2\pi) - \frac{d}{2}\log(w_{tk}) - \frac{1}{2}\log \lvert \bs \Sigma_k \lvert - \frac{\delta(\bs y_t; \bs \mu_k, \bs \Sigma_k)}{2w_{tk}}  \right ]
\ee

and

{\footnotesize
\be\label{eq:l2}
l_2(\lambda_k, \chi_k, \psi_k \lvert S_t = k) = \sum_{t=1}^T \left \{ (\lambda_k - 1)\log(w_{tk}) - \frac{1}{2}\frac{\chi_k}{w_{tk}} - \frac{1}{2}\psi_k w_{tk} - \frac{1}{2}\lambda_k\log(\chi_k) + \frac{1}{2}\lambda_k\log(\psi_k) - \log \left [2K_{\lambda_k}\left (\sqrt{(\psi_k \chi_k)} \right) \right ]  \right \}
\ee
}

Working on $\ell_c(\bs \theta)$, we adopt the expectation-conditional maximization either (ECME) algorithm of \cite{liu1994ecme}. 
 The ECME algorithm is an extension of the expectation-conditional maximum (ECM) algorithm which 
 replaces the M-step of the EM algorithm by a number of computationally simpler conditional maximization (CM) steps. The ECME algorithm generalizes the ECM one by conditionally maximizing on some or all of the CM-steps the incomplete-data log-likelihood. In our case, the ECME algorithm iterates between four steps, one E-step and three CM-steps, until convergence. The three CM-steps arise from the update of the partition of $\bs \theta$ as $\{\bs \theta_1, \bs \theta_2, \bs \theta_3 \}$, where $\bs \theta_1 = \{\bs \mu_1, \bs \Sigma_1, \dots, \bs \mu_K, \bs \Sigma_K \}$, $\bs \theta_2 = \{\lambda_1, \chi_1, \psi_1, \dots, \lambda_K, \chi_K, \psi_K \}$ and $\bs \theta_3 = \{ \bs \pi, \bs \Pi \}$. 
 

\subsubsection*{E-step:} 
In the E-step, at the generic $(h+1)$-th iteration, the unobservable indicator variables ${\gamma_t}(k)$ and ${\xi_t}(j,k)$ in \eqref{eq:completel} are replaced by their conditional expectations given the observed data and the current parameter estimates $\bs \theta^{(h)}$. To compute such quantities we require the calculation of the probability of being in state $k$ at time $t$ given the observed sequence 
\be\label{gamma}
\gamma_t^{(h)}(k) = P_{\bs \theta^{(h)}}(S_t = k \lvert \bs y_1,\dots, \bs y_T )
\ee
and the probability that at time $t-1$ the process is in state $j$ and then in state $k$ at time $t$, given the observed sequence
\be\label{xi} 
\xi_t^{(h)}(j,k) = P_{\bs \theta^{(h)}}(S_{t-1} = j, S_t = k \lvert \bs y_1,\dots, \bs y_T ).
\ee
The quantities in \eqref{gamma} and \eqref{xi} can be obtained using the Forward-Backward algorithm of \cite{welch2003hidden}. 

Then, we use these to calculate the conditional expectation of \eqref{eq:completel}, given the observed data and the current estimates:
\be
\begin{aligned}\label{eq:estep}
Q(\bs \theta \lvert \bs \theta^{(h)}) = \sum_{k=1}^K \gamma^{(h)}_1(k)\log\pi_k &+ \sum_{t=1}^T\sum_{k=1}^K \sum_{j=1}^K \xi^{(h)}_t(j,k) \log \pi_{k\lvert j} \\ &+ \sum_{t=1}^T\sum_{k=1}^K \gamma^{(h)}_t(k) \left [ Q_1(\bs \mu_k, \bs \Sigma_k \lvert \bs \theta^{(h)}, S_t = k) + Q_2(\lambda_k, \chi_k, \psi_k \lvert \bs \theta^{(h)}, S_t = k) \right ].
\end{aligned}
\ee

In \eqref{eq:estep}, $Q_1(\bs \mu_k, \bs \Sigma_k \lvert \bs \theta^{(h)}, S_t = k)$ and $Q_2(\lambda_k, \chi_k, \psi_k \lvert \bs \theta^{(h)}, S_t = k)$ are respectively the conditional expectations of  $l_1(\bs \mu_k, \bs \Sigma_k \lvert S_t = k)$ and 
$l_2(\lambda_k, \chi_k, \psi_k \lvert S_t = k)$ given the observed data, using the current $\bs \theta^{(h)}$ for $\bs \theta$.  To compute $Q_1$ and $Q_2$, we have to consider the expected value of any function of the latent variable $W_{tk}$ in equation \eqref{eq:l1}. 
In particular the functions are $w_{tk}$, $1/w_{tk}$ and $\log(w_{tk})$ and their expected values must be calculated with respect to the current $\bs \theta^{(h)}$. Since the conditional distribution of $W_{tk}$ given $\bs Y_t$ corresponds to a GIG distribution with parameters $\lambda_k - \frac{d}{2}, \delta(\bs y_t; \bs \mu_k, \bs \Sigma_k) + \chi_k, \psi_k$, i.e.,

\be\label{eq:gig}
f_{W_{tk} |\bs Y_t}(w_{tk} | \bs Y_t= \bs y_t, S_t = k) \sim GIG(\lambda_k - \frac{d}{2}, \delta(\bs y_t; \bs \mu_k, \bs \Sigma_k) + \chi_k, \psi_k),
\ee

\noindent{it} follows that (see \citealt{chatzis2010hidden})

\be \label{eq:v_t}
{\small
v^{(h)}_{tk} = E^{(h)}
[W_{tk} \lvert \bs Y_t= \bs y_t, S_t = k] = \left(\frac{\delta(\bs y_t; \bs \mu_k^{(h)}, \bs \Sigma_k^{(h)}) + \chi_k^{(h)}}{\psi_k^{(h)}} \right)^{\frac{1}{2}} \frac{K_{\lambda_k^{(h)} - \frac{d}{2} +1} \left( \sqrt{\psi_k^{(h)} \left [ \delta(\bs y_t; \bs \mu_k^{(h)}, \bs \Sigma_k^{(h)}) + \chi_k^{(h)} \right ]} \right) }{K_{\lambda_k^{(h)} - \frac{d}{2}} \left( \sqrt{\psi_k^{(h)} \left [ \delta(\bs y_t; \bs \mu_k^{(h)}, \bs \Sigma_k^{(h)}) + \chi_k^{(h)} \right ]} \right) },
}
\ee

\be
{\small
\begin{aligned}\label{eq:u_t}
u^{(h)}_{tk} = E^{(h)}[W_{tk}^{-1}\lvert \bs Y_t= \bs y_t, S_t = k] &= \left(\frac{\delta(\bs y_t; \bs \mu_k^{(h)}, \bs \Sigma_k^{(h)}) + \chi_k^{(h)}}{\psi_k^{(h)}} \right)^{-\frac{1}{2}} \frac{K_{\lambda_k^{(h)} - \frac{d}{2} +1} \left( \sqrt{\psi_k^{(h)} \left [ \delta(\bs y_t; \bs \mu_k^{(h)}, \bs \Sigma_k^{(h)}) + \chi_k^{(h)} \right ]} \right) }{K_{\lambda_k^{(h)} - \frac{d}{2}} \left( \sqrt{\psi_k^{(h)} \left [ \delta(\bs y_t; \bs \mu_k^{(h)}, \bs \Sigma_k^{(h)}) + \chi_k^{(h)} \right ]} \right) } \\ &- \frac{2 \left ( \lambda_k^{(h)}  - \frac{d}{2} \right)}{\delta(\bs y_t; \bs \mu_k^{(h)}, \bs \Sigma_k^{(h)}) + \chi_k^{(h)}}
\end{aligned}
}
\ee

\noindent{and}
\be
\begin{aligned}\label{eq:z_t}
	z^{(h)}_{tk} = E^{(h)}[\log(W_{tk})\lvert \bs Y_t= \bs y_t, S_t = k] &= \log \left ( \sqrt{\frac{\delta(\bs y_t; \bs \mu_k^{(h)}, \bs \Sigma_k^{(h)}) + \chi_k^{(h)}}{\psi_k^{(h)}} }\right ) \\&+ \frac{\partial}{\partial{\lambda_k^{(h)}}}\log K_{\lambda_k^{(h)} - \frac{d}{2}} \left ( \sqrt{\psi_k^{(h)} \left [ \delta(\bs y_t; \bs \mu_k^{(h)}, \bs \Sigma_k^{(h)}) + \chi_k^{(h)} \right ]} \right ).
\end{aligned}
\ee

As a consequence, substituting $w_{tk}$, $1/w_{tk}$ and $\log(w_{tk})$ with $v^{(h)}_{tk}$, $u^{(h)}_{tk}$ and $z^{(h)}_{tk}$ respectively in $l_1(\bs \mu_k, \bs \Sigma_k | S_t = k)$ and $l_2(\lambda_k, \chi_k, \psi_k \lvert S_t = k)$ we obtain

\be\label{eq:Q1}
Q_1(\bs \mu_l, \bs \Sigma_k \lvert \bs \theta^{(h)}, S_t = k) = \sum_{t=1}^T \left[ - \frac{1}{2}\log \lvert \bs \Sigma_k \lvert - \frac{u^{(h)}_{tk} \delta(\bs y_t; \bs \mu_k, \bs \Sigma_k)}{2}  \right ]
\ee

{\footnotesize
\be
\begin{aligned}\label{eq:Q2}
Q_2(\lambda_k, \chi_k, \psi_k \lvert \bs \theta^{(h)}, S_t = k) = \sum_{t=1}^T \left \{ (\lambda_k - 1)z^{(h)}_{tk} - u^{(h)}_{tk}\frac{\chi_k}{2} - \frac{\psi_k}{2} v^{(h)}_{tk} - \frac{\lambda_k}{2} \left [\log(\chi_k) + \log(\psi_k) \right ] - \log \left [2K_{\lambda_k}\left (\sqrt{(\psi_k \chi_k)} \right) \right ]  \right \}
\end{aligned}
\ee
}

where in \eqref{eq:Q1} we dropped the terms which are constant with respect to $\bs \mu_k$ and $\bs \Sigma_k$.

\subsubsection*{CM-step 1:} 

The initial probabilities $\pi_k$ and transition probabilities $\pi_{k\lvert j}$ of the partition $\bs \theta_3$ are updated using:
\be
\pi^{(h+1)}_k = \gamma^{(h)}_1(k), \quad k = 1,\dots,K
\ee
and
\be
\pi^{(h+1)}_{k\lvert j} = \frac{\sum_{t=1}^T \xi^{(h)}_t(j,k)}{\sum_{t=1}^T \sum_{k=1}^K \xi^{(h)}_t(j,k)}, \quad j,k = 1,\dots,K.
\ee

\subsubsection*{CM-step 2:} 
The second maximization step requires the calculation of $\bs \theta^{(h + 1)}_1$ as the value of $\bs \theta_1$ that maximizes $Q_1(\bs \mu_k, \bs \Sigma_k \lvert \bs \theta^{(h)}, S_t = k)$ in \eqref{eq:Q1}, with $\bs \theta_2$ fixed at $\bs \theta^{(h + 1)}_2$. The first order conditions with respect to $\bs \mu_k$ and $\bs \Theta_k = \bs \Sigma_k^{-1}$ yield

\be\label{eq:1stordermu}
\frac{\partial{Q_1(\bs \mu_k, \bs \Sigma_k \lvert \bs \theta^{(h)}_2)}}{\partial{\bs \mu_k}} = \sum_{t=1}^T \sum_{k=1}^K u^{(h)}_{tk} (\bs y_t - \bs \mu_k) \gamma^{(h)}_t(k)
\ee

\be\label{eq:1storderSigma}
\frac{\partial{Q_1(\bs \mu_k, \bs \Theta_k \lvert \bs \theta^{(h)}_2)}}{\partial{\bs \Theta_k}} = \sum_{t=1}^T \sum_{k=1}^K \gamma^{(h)}_t(k) \left [ \frac{1}{2}\frac{\partial}{\partial{\bs \Theta_k}}\log \lvert \bs \Theta_k \lvert	 - \frac{1}{2}u^{(h)}_{tk}\frac{\partial}{\partial{\bs \Theta_k}} (\bs y_t - \bs \mu_k)'\bs \Theta_k(\bs y_t - \bs \mu_k) \right ]
\ee

so the CM-step update expressions for $\bs \mu_k$ and $\bs \Sigma_k$ are 

\be\label{eq:mu}
\bs \mu_k^{(h+1)} = \frac{\sum_{t=1}^T \sum_{k=1}^K \gamma^{(h)}_t(k) u^{(h)}_{tk} \bs y_t}{\sum_{t=1}^T \sum_{k=1}^K \gamma^{(h)}_t(k) u^{(h)}_{tk}}
\ee

\noindent{and} 

\be\label{eq:Sigma}
\bs \Sigma_k^{(h+1)} = \lvert \bs \Sigma_k^{*(h+1)}\lvert^{-\frac{1}{d}} \bs \Sigma_k^{*(h+1)}
\ee

\noindent{where}

\be\label{eq:Sigma*}
\bs \Sigma_k^{*(h+1)} = \frac{\sum_{t=1}^T \sum_{k=1}^K \gamma^{(h)}_t(k) u^{(h)}_{tk} (\bs y_t - \bs \mu_k^{(h+1)})'(\bs y_t - \bs \mu_k^{(h+1)})}{\sum_{t=1}^T \sum_{k=1}^K \gamma^{(h)}_t(k)}
\ee

In \eqref{eq:Sigma}, the scalar $\lvert \bs \Sigma_k^{*(h+1)}\lvert^{-\frac{1}{d}}$ is needed to ensure the identifiability constraint  $\lvert \bs \Sigma_k^{(h+1)} \lvert = 1$.

\subsubsection*{Cm-step 3:} 
In the third CM-step we choose the value of $\bs \theta_2$ that maximizes $\ell_c(\bs \theta)$ in \eqref{eq:completel}, with $\bs \theta_1$ fixed at $\bs \theta^{(h+1)}_1$. As a closed-form solution for $\bs \theta^{(h+1)}_2$ is not analytically available, numerical optimization 
 can be used with this aim. As in \cite{bagnato2023generalized}, we perform an unconstrained maximization on $\mathbb{R}^{3}$, based on a (log/exp) transformation/back-transformation approach for $\chi_k$ and $\psi_k$.\\

The E- and CM-steps are alternated until convergence, that is when the observed likelihood between two consecutive iterations is smaller than a predetermined threshold. In this paper, we set this threshold criterion equal to $10^{-8}$. 

For fixed $K$ we initialize the ECME algorithm by providing the initial states partition, $\{ S_t^{(0)} \}_{t=1}^T$, according to the $K$-means algorithm. From the generated partition, the elements of $\bs \Pi^{(0)}$ are computed as proportions of transition. 
The location parameters $\bs \mu_1, \dots, \bs \mu_K$ are obtained from the centroid of the the derived clusters while we set the initial values of the $\bs \Sigma_k$'s as the empirical covariance matrices of the obtained clusters with unit determinant using \eqref{eq:Sigma}. Finally, the $(\lambda_k, \ \chi_k, \ \psi_k)$'s are initialized from uniform distributions.
 To deal with the possibility of multiple roots of the likelihood equation and better explore the parameter space, 
 we fit the proposed HMGHGM using a multiple random starts strategy with different starting partitions and retain the solution corresponding to the maximum likelihood value.
 
 All the computations have been conducted using the R software, version 4.3.0 (\citealt{R}).


\subsection{Penalized inference with the GH distribution}\label{sec:pen}
 In this section we extend the procedure described above for estimating high-dimensional graphs where the number of model parameters grows with the dimension of the problem. Indeed, the correct identification of the sparsity patterns is a fundamental issue for capturing the most relevant interconnections, which motivates us to use sparse estimators that automatically shrink the elements of the inverse of the  scale matrix matrix. In the literature, within the GGMs context, several works have been put forward to obtain sparse estimates of the concentration matrix. \cite{friedman2008sparse} introduced the efficient coordinate descent algorithm, denoted as \textit{glasso}, to estimate a sparse graph using the Lasso $L_1$ penalty of \cite{tib1996lasso}. \cite{gao2015estimation} proposed to estimate the concentration matrix using further penalties, i.e., the smoothly clipped absolute deviation of \cite{fan2001variable} and the minimax concave penalty of \cite{zhang2010nearly}. Alternatively \cite{meinshausen2006high} presented a neighbourhood selection scheme that estimates the conditional independence relations separately for each node in the graph by using $L_1$-penalized regressions.\\
In this work, we exploit the efficient \textit{glasso} approach of \cite{friedman2008sparse} to induce sparsity in the precision matrices of the GH for each latent state, without relying on the limitation of normally distributed returns. Specifically, starting from the EM algorithm of Section \ref{sec:Est} and following \cite{green1990use}, we construct a PEM algorithm by adding to the complete likelihood in \eqref{eq:completel} an $L_1$-norm penalty that shrinks to zero the off-diagonal elements of the $\bs \Theta_k^{-1}$'s.
The penalized version of the conditional Normal log-likelihood in \eqref{eq:l1} is reported in the following definition.

\begin{definition}\label{prop:completeloglikpen}
The conditional Normal log-likelihood function is proportional to:
\begin{equation}\label{eq:completelpen}
\ell_{1,pen}(\bs \mu_k, \bs \Theta_k| \bs Y_t, W_{tk}, S_t = k) \propto \frac{1}{2} \sum_{t=1}^T \gamma_t(k) \log | \bs \Theta_k | - \frac{1}{2} \textnormal{tr} \{ \tilde{\bs S}_k \bs \Theta_k  \} -\rho \sqrt{\nu_k} ||\bs \Theta_k ||_1
\end{equation}
where $|| \bs \Theta_k ||_1$ is the sum of the absolute values of the off-diagonal entries of the matrix $\bs \Theta_k$ and $\tilde{\bs S}_k$ is the weighted empirical covariance matrix defined as
\begin{equation}\label{eq:Smat}
\tilde{\bs S}_k = \sum_{t=1}^T \frac{\gamma_t(k)}{W_{tk}} (\bs Y_t - \boldsymbol{\mu}_k) (\bs Y_t - \boldsymbol{\mu}_k)'.
\end{equation}

Additionally, $\nu_k > 0$ is a state-dependent weight to allow for a different penalty in each latent state such that $\sum_{k=1}^K \nu_k = 1$ and $\rho \geq 0$ is the tuning parameter.
\end{definition}

As one can see, the penalized complete likelihood in \eqref{eq:completelpen} is exactly the objective function maximized in a GGM where the empirical covariance matrix of the data is substituted by the matrix $\tilde{\bs S}_k$ defined in \eqref{eq:Smat}, for $k = 1,\dots,K$. This suggests that we can exploit the simple and fast \textit{glasso} algorithm of \cite{friedman2008sparse} to update the estimate of $\bs \Theta_k$. More specifically, for a given value of $\rho$, the proposed PEM alternates between the E-step described in Section \ref{sub:EM} and essentially modifies the CM-step 2 by maximizing the quantity in \eqref{eq:completelpen}.

 
\section{Simulation study}\label{sec:sim}
We conduct a simulation study to validate the performance of our model under different scenarios in terms of:
\begin{enumerate}
\item[1.] recovering the true values of the parameters for a varying number of hidden states, 
\item[2.] clustering performance,
\item[3.] the ability to correctly retrieve the edge sets associated to each hidden state. 
\end{enumerate}
 
\noindent{\textit{Scenario 1:}} In the first scenario we consider a $2$-dimensional vector $\bs Y_t$ with sample size of $T = 1000$, $150$ Monte Carlo simulations, $70$ random starts for the EM algorithm and we draw observations from a one-, two- and three-state HMM. 
 Conditional on the hidden state, observations are generate from the following six models: Gaussian, t, Cauchy, Laplace, generalized hyperbolic and variance gamma (see Table \ref{tab:DGM}). They all share the same location and scale parameters $\bs \mu_1 = [5,5],\ \bs \Sigma_1 = [1.51, -1.13, -1.13, 1.51]$, $\bs \mu_2 = [-5,-5],\ \bs \Sigma_2 = [1.51, 1.13, 1.13, 1.51]$, $\bs \mu_3 = [0,0],\ \bs \Sigma_3 = [1.01, .12, .12, 1.01]$. Parameters $\lambda,\ \chi, \ \psi$ are state-invariant and they are chosen so to obtain the six models in Table \ref{tab:DGM}. In this first scenario, to fit the proposed model we use the non penalized ECME algorithm discussed in Section \ref{sub:EM}. The matrices of transition probabilities and the vectors of initial probabilities are set equal to $\bs \Pi = \bigl( \begin{smallmatrix} 0.9 & 0.1\\ 0.1 & 0.9\end{smallmatrix}\bigr)$ and $\bs \pi = [0.7, 0.3]$ for $K=2$, and to $\bs \Pi = \bigl( \begin{smallmatrix} 0.8 & 0.1 & 0.1\\ 0.1 & 0.8 & 0.1\\ 0.1 & 0.1 & 0.8\\ \end{smallmatrix}\bigr)$ and $\bs \pi = [ 0.4, 0.3, 0.3]$ for $K=3$, respectively. 
 In order to assess the validity of our model we compute the point estimate and standard error values associated to the state-specific coefficients, averaged over the Monte Carlo replications. 
 Tables \ref{tab:normsim}, \ref{tab:tsim}, \ref{tab:lapsim}, \ref{tab:cauchysim}, \ref{tab:ghsim}, \ref{tab:vgsim} illustrate the results for the six data generating models considered. For every model, the estimated location and scale parameters are very close to the true values. This is also true for the parameters ($\lambda_k, \chi_k, \psi_k$) in every state. In particular, for the Gaussian generating model where $\lambda_k$ 
 is constrained so that $\lambda_k = -\chi_k/2$, we observe large values for $\chi_k$ whereas $\psi_k$ is approximately zero. For all the other distributions, the point estimates of $\lambda_k$ are always very close to the real values in Table \ref{tab:DGM}. The estimated $\psi_k$'s are small for the t and the Cauchy, while, as expected, are very close to 0.5 for the Cauchy and variance gamma. Finally, as regards the $\chi_k$'s, they are close to 2 for the t, the Cauchy and the Generalized Hyperbolic, and near to zero for the Laplace and variance gamma. 


\begin{table}[htbp]
  \centering
    \begin{tabular}{lrrrrrr}
    Parameters & Gaussian & t     & Cauchy & Laplace & generalized hyperbolic & variance gamma \\
    \midrule
    $\lambda$ & -20   & -1    & -0.5  & 1     & (d+1)/2    & 1.5 \\
    $\chi$ & 40    & 2     & 2     & 0.001 & 2     & 0.001 \\
    $\psi$ & 0.001 & 0.001 & 0.001 & 0.5   & 3     & 0.5 \\
    \bottomrule
    \end{tabular}%
    \caption{Values of the parameters $\lambda,\ \chi, \ \psi$ used in the simulation study.}
  \label{tab:DGM}%
\end{table}%
  
 \begin{table}[htbp]
  \centering
\begin{tabular}{lrrrrrrrrr}
      & $\bs \mu_k$ &       & $\bs \Sigma_k$ &       &       &       & $\lambda$ & $\chi$ & $\psi$ \\
\midrule
\textit{$K=1$} &       &       &       &       &       &       &       &       &  \\
Est.  & 4.982 & 5.008 & 1.506 & -1.123 & -1.123 & 1.507 & -8.261 & 16.522 & 0.139 \\
Std. Err. & 0.061 & 0.054 & 0.05  & 0.053 & 0.053 & 0.051 & 1.514 & 3.027 & 0.059 \\
\textit{$K=2$} &       &       &       &       &       &       &       &       &  \\
Est.  & -4.99 & -4.984 & 1.5   & -1.119 & -1.119 & 1.502 & -7.472 & 14.945 & 0.038 \\
Std. Err. & 0.086 & 0.079 & 0.072 & 0.075 & 0.075 & 0.078 & 6.828 & 13.656 & 0.134 \\
Est.  & 4.987 & 5.007 & 1.527 & 1.147 & 1.147 & 1.517 & -11.16 & 22.321 & 0.002 \\
Std. Err. & 0.086 & 0.09  & 0.074 & 0.071 & 0.071 & 0.072 & 5.827 & 11.655 & 0.135 \\
\textit{$K=3$} &       &       &       &       &       &       &       &       &  \\
Est.  & -5.102 & -5.098 & 1.51  & -1.137 & -1.137 & 1.518 & -7.961 & 15.922 & 0.003 \\
Std. Err. & 0.142 & 0.133 & 0.086 & 0.095 & 0.095 & 0.096 & 6.255 & 12.509 & 0.091 \\
Est.  & -0.02 & -0.001 & 1.018 & 0.112 & 0.112 & 0.995 & -21.538 & 43.076 & 0.067 \\
Std. Err. & 0.107 & 0.102 & 0.062 & 0.07  & 0.07  & 0.06  & 8.968 & 17.937 & 0.053 \\
Est.  & 4.969 & 5.005 & 1.441 & 1.04  & 1.04  & 1.445 & -7.719 & 15.438 & 0.028 \\
Std. Err. & 0.114 & 0.114 & 0.094 & 0.09  & 0.09  & 0.093 & 8.088 & 11.175 & 0.069 \\
\bottomrule
\end{tabular}%
    \caption{Point estimate (Est.) and standard error (Std. Err.) values of the parameters of the Gaussian data generating model, with $T=1000$ observations for all of the three states of the HMM.}
  \label{tab:normsim}%
\end{table}%

  \begin{table}[htbp]
  \centering
\begin{tabular}{lrrrrrrrrr}
      & $\bs \mu_k$ &       & $\bs \Sigma_k$ &       &       &       & $\lambda$ & $\chi$ & $\psi$ \\
\midrule
\textit{$K=1$} &       &       &       &       &       &       &       &       &  \\
Est.  & 4.995 & 4.998 & 1.506 & -1.125 & -1.125 & 1.51  & -0.973 & 1.936 & 0.002 \\
Std. Err. & 0.042 & 0.042 & 0.066 & 0.057 & 0.057 & 0.058 & 0.107 & 0.256 & 0.013 \\
\textit{$K=2$} &       &       &       &       &       &       &       &       &  \\
Est.  & -5.003 & -4.996 & 1.5   & -1.119 & -1.119 & 1.502 & -0.952 & 1.893 & 0.002 \\
Std. Err. & 0.065 & 0.067 & 0.09  & 0.081 & 0.081 & 0.083 & 0.149 & 0.379 & 0.02 \\
Est.  & 4.99  & 5.006 & 1.529 & 1.159 & 1.159 & 1.532 & -0.952 & 1.857 & 0.002 \\
Std. Err. & 0.061 & 0.063 & 0.094 & 0.087 & 0.087 & 0.085 & 0.149 & 0.378 & 0.022 \\
\textit{$K=3$} &       &       &       &       &       &       &       &       &  \\
Est.  & -5.011 & -4.993 & 1.491 & -1.107 & -1.107 & 1.493 & -0.944 & 1.821 & 0.002 \\
Std. Err. & 0.077 & 0.081 & 0.111 & 0.108 & 0.108 & 0.096 & 0.299 & 0.56  & 0.082 \\
Est.  & 0.002 & 0.004 & 1.006 & 0.123 & 0.123 & 1.009 & -0.912 & 1.758 & 0.003 \\
Std. Err. & 0.067 & 0.068 & 0.078 & 0.077 & 0.077 & 0.081 & 0.222 & 0.504 & 0.045 \\
Est.  & 4.982 & 5.02  & 1.484 & 1.103 & 1.103 & 1.493 & -0.941 & 1.873 & 0.003 \\
Std. Err. & 0.076 & 0.081 & 0.105 & 0.109 & 0.109 & 0.12  & 0.298 & 0.569 & 0.081 \\
\bottomrule
\end{tabular}%
    \caption{Point estimate (Est.) and standard error (Std. Err.) values of the parameters of the t data generating model, with $T=1000$ observations for all of the three states of the HMM.}
  \label{tab:tsim}%
\end{table}%

 \begin{table}[htbp]
  \centering
\begin{tabular}{lrrrrrrrrr}
      & $\bs \mu_k$ &       & $\bs \Sigma_k$ &       &       &       & $\lambda$ & $\chi$ & $\psi$ \\
\midrule
\textit{$K=1$} &       &       &       &       &       &       &       &       &  \\
Est.  & 5.002 & 4.992 & 1.508 & -1.127 & -1.127 & 1.507 & -0.501 & 1.978 & 0.001 \\
Std. Err. & 0.064 & 0.064 & 0.062 & 0.066 & 0.066 & 0.061 & 0.047 & 0.288 & 0.001 \\
\textit{$K=2$} &       &       &       &       &       &       &       &       &  \\
Est.  & -4.983 & -4.983 & 1.513 & -1.125 & -1.125 & 1.497 & -0.484 & 1.933 & 0.001 \\
Std. Err. & 0.102 & 0.096 & 0.084 & 0.083 & 0.083 & 0.089 & 0.076 & 0.424 & 0.001 \\
Est.  & 4.991 & 5.011 & 1.496 & 1.121 & 1.121 & 1.509 & -0.483 & 1.881 & 0.001 \\
Std. Err. & 0.094 & 0.099 & 0.09  & 0.096 & 0.096 & 0.101 & 0.084 & 0.439 & 0.002 \\
\textit{$K=3$} &       &       &       &       &       &       &       &       &  \\
Est.  & -4.988 & -4.974 & 1.481 & -1.104 & -1.104 & 1.499 & -0.481 & 1.901 & 0.001 \\
Std. Err. & 2.648 & 3.838 & 0.16  & 0.161 & 0.161 & 0.143 & 1.233 & 3.551 & 0.022 \\
Est.  & -0.012 & 0     & 1.018 & 0.14  & 0.14  & 1.001 & -0.461 & 1.8   & 0.001 \\
Std. Err. & 0.327 & 0.521 & 0.136 & 0.231 & 0.231 & 0.164 & 1.638 & 3.642 & 0.015 \\
Est.  & 4.986 & 5.01  & 1.497 & 1.131 & 1.131 & 1.523 & -0.472 & 1.834 & 0.001 \\
Std. Err. & 0.119 & 1.911 & 0.131 & 0.144 & 0.144 & 0.154 & 0.212 & 1.479 & 0.009 \\
\bottomrule
\end{tabular}%
    \caption{Point estimate (Est.) and standard error (Std. Err.) values of the parameters of the Cauchy data generating model, with $T=1000$ observations for all of the three states of the HMM.}
  \label{tab:cauchysim}%
\end{table}%

 \begin{table}[htbp]
  \centering
\begin{tabular}{lrrrrrrrrr}
      & $\bs \mu_k$ &       & $\bs \Sigma_k$ &       &       &       & $\lambda$ & $\chi$ & $\psi$ \\
\midrule
\textit{$K=1$} &       &       &       &       &       &       &       &       &  \\
Est.  & 4.993 & 5.008 & 1.509 & -1.13 & -1.13 & 1.505 & 0.896 & 0     & 0.446 \\
Std. Err. & 0.049 & 0.049 & 0.057 & 0.053 & 0.053 & 0.052 & 0.12  & 0.105 & 0.062 \\
\textit{$K=2$} &       &       &       &       &       &       &       &       &  \\
Est.  & -4.986 & -5.002 & 1.504 & -1.116 & -1.116 & 1.493 & 0.824 & 0     & 0.41 \\
Std. Err. & 0.072 & 0.074 & 0.08  & 0.085 & 0.085 & 0.089 & 0.165 & 0.118 & 0.085 \\
Est.  & 4.986 & 5.001 & 1.496 & 1.119 & 1.119 & 1.505 & 0.81  & 0.004 & 0.401 \\
Std. Err. & 0.075 & 0.074 & 0.085 & 0.082 & 0.082 & 0.08  & 0.227 & 0.26  & 0.092 \\
\textit{$K=3$} &       &       &       &       &       &       &       &       &  \\
Est.  & -4.988 & -4.999 & 1.511 & -1.135 & -1.135 & 1.514 & 0.754 & 0.001 & 0.36 \\
Std. Err. & 0.089 & 0.092 & 0.108 & 0.109 & 0.109 & 0.099 & 0.245 & 0.276 & 0.119 \\
Est.  & -0.007 & -0.003 & 1.007 & 0.137 & 0.137 & 1.012 & 0.756 & 0     & 0.368 \\
Std. Err. & 0.07  & 0.076 & 0.068 & 0.076 & 0.076 & 0.07  & 0.188 & 0.21  & 0.095 \\
Est.  & 4.984 & 5.012 & 1.476 & 1.099 & 1.099 & 1.496 & 0.741 & 0     & 0.352 \\
Std. Err. & 0.092 & 0.091 & 0.11  & 0.105 & 0.105 & 0.099 & 0.271 & 0.35  & 0.127 \\
\bottomrule
\end{tabular}%
    \caption{Point estimate (Est.) and standard error (Std. Err.) values of the parameters of the Laplace data generating model, with $T=1000$ observations for all of the three states of the HMM.}
  \label{tab:lapsim}%
\end{table}%

 \begin{table}[htbp]
  \centering
\begin{tabular}{lrrrrrrrrr}
      & $\bs \mu_k$ &       & $\bs \Sigma_k$ &       &       &       & $\lambda$ & $\chi$ & $\psi$ \\
\midrule
\textit{$K=1$} &       &       &       &       &       &       &       &       &  \\
Est.  & 4.991 & 5.005 & 1.51  & -1.12 & -1.12 & 1.501 & 1.48  & 2.068 & 2.849 \\
Std. Err. & 0.044 & 0.042 & 0.05  & 0.054 & 0.054 & 0.055 & 2.782 & 3.666 & 1.767 \\
\textit{$K=2$} &       &       &       &       &       &       &       &       &  \\
Est.  & -5.003 & -4.996 & 1.042 & -0.3  & -0.3  & 1.046 & 1.889 & 1.743 & 2.987 \\
Std. Err. & 0.05  & 0.054 & 0.052 & 0.059 & 0.059 & 0.057 & 3.46  & 4.566 & 2.461 \\
Est.  & 4.991 & 4.999 & 1.051 & 0.318 & 0.318 & 1.048 & 1.631 & 1.563 & 2.868 \\
Std. Err. & 0.054 & 0.054 & 0.05  & 0.049 & 0.049 & 0.049 & 3.136 & 4.315 & 1.979 \\
\textit{$K=3$} &       &       &       &       &       &       &       &       &  \\
Est.  & -5    & -5    & 1.486 & -1.102 & -1.102 & 1.491 & 1.883 & 0.968 & 2.644 \\
Std. Err. & 0.079 & 0.071 & 0.088 & 0.094 & 0.094 & 0.101 & 3.472 & 4.164 & 2.69 \\
Est.  & -0.01 & 0.014 & 1     & 0.112 & 0.112 & 1.012 & 2.078 & 1.018 & 2.854 \\
Std. Err. & 0.068 & 0.065 & 0.059 & 0.068 & 0.068 & 0.059 & 3.49  & 4.286 & 2.734 \\
Est.  & 4.995 & 5.015 & 1.524 & 1.144 & 1.144 & 1.514 & 1.793 & 2.045 & 2.603 \\
Std. Err. & 0.089 & 0.088 & 0.106 & 0.099 & 0.099 & 0.103 & 3.914 & 4.688 & 3.203 \\
\bottomrule
\end{tabular}%
    \caption{Point estimate (Est.) and standard error (Std. Err.) values of the parameters of the generalized hyperbolic data generating model, with $T=1000$ observations for all of the three states of the HMM.}
  \label{tab:ghsim}%
\end{table}%

 \begin{table}[htbp]
  \centering
\begin{tabular}{lrrrrrrrrr}
      & $\bs \mu_k$ &       & $\bs \Sigma_k$ &       &       &       & $\lambda$ & $\chi$ & $\psi$ \\
\midrule
\textit{$K=1$} &       &       &       &       &       &       &       &       &  \\
Est.  & 4.993 & 4.992 & 1.509 & -1.127 & -1.127 & 1.507 & 1.393 & 0.034 & 0.471 \\
Std. Err. & 0.068 & 0.068 & 0.053 & 0.056 & 0.056 & 0.053 & 0.409 & 1.105 & 0.093 \\
\textit{$K=2$} &       &       &       &       &       &       &       &       &  \\
Est.  & -5.004 & -4.997 & 1.505 & -1.137 & -1.137 & 1.524 & 1.395 & 0.013 & 0.467 \\
Std. Err. & 0.101 & 0.101 & 0.081 & 0.081 & 0.081 & 0.082 & 0.625 & 1.851 & 0.138 \\
Est.  & 4.994 & 5.012 & 1.515 & 1.136 & 1.136 & 1.511 & 1.269 & 0.077 & 0.428 \\
Std. Err. & 0.109 & 0.105 & 0.07  & 0.073 & 0.073 & 0.072 & 0.654 & 2.142 & 0.134 \\
\textit{$K=3$} &       &       &       &       &       &       &       &       &  \\
Est.  & -5.02 & -4.996 & 1.501 & -1.114 & -1.114 & 1.492 & 1.297 & 0.024 & 0.444 \\
Std. Err. & 0.139 & 0.145 & 0.09  & 0.101 & 0.101 & 0.107 & 0.846 & 2.935 & 0.184 \\
Est.  & -0.005 & 0.009 & 1.019 & 0.128 & 0.128 & 0.998 & 1.231 & 0.107 & 0.419 \\
Std. Err. & 0.119 & 0.115 & 0.074 & 0.083 & 0.083 & 0.073 & 0.751 & 2.32  & 0.204 \\
Est.  & 4.996 & 5.011 & 1.528 & 1.146 & 1.146 & 1.514 & 1.294 & 0.022 & 0.429 \\
Std. Err. & 0.132 & 0.138 & 0.11  & 0.095 & 0.095 & 0.1   & 0.893 & 3.113 & 0.188 \\
\bottomrule
\end{tabular}%
    \caption{Point estimate (Est.) and standard error (Std. Err.) values of the parameters of the variance gamma data generating model, with $T=1000$ observations for all of the three states of the HMM.}
  \label{tab:vgsim}%
\end{table}%


 \noindent{\textit{Scenario 2:}} To evaluate the ability in recovering the true states partition we consider the Adjusted Rand Index (ARI) of \cite{hubert1985comparing}. The state partition provided by the fitted models is obtained by taking the maximum, $\underset{k}{\max} \, \gamma_{t} (k)$, a posteriori probability for every $t = 1,\dots,T$. In Table \ref{tab:ARI} we report the mean and standard deviation of ARI for the posterior probabilities for the two ($K = 2$) and three-state ($K = 3$) HMGHGMs across the six settings considered.
  Firstly, we observe that the data generating model plays a fundamental role in estimating the true states partition, as we obtain slightly worst clustering performance for the model with the Cauchy distribution. Secondly, the goodness of the clustering obtained depends, as expected, on the number of states of the Markov chain considered, being the values slightly higher with $K = 2$ than with $K = 3$. Overall, the proposed HMGHGM is able to recover the true state partition highly satisfactory in all the cases examined.



\begin{table}[htbp]
  \centering
\begin{tabular}{lrrrrrr}
      &       &       & $K = 2$ &       &       &  \\
      & N     & t     & C     & L     & gh    & vg \\
\midrule
Mean  & 0.9989 & 0.9851 & 0.943 & 0.9909 & 0.9999 & 0.9833 \\
Std. Err. & 0.0022 & 0.0076 & 0.0157 & 0.0064 & 0.0005 & 0.0081 \\
      &       &       & $K = 3$ &       &       &  \\
Mean  & 0.9322 & 0.9228 & 0.7709 & 0.9059 & 0.9786 & 0.8506 \\
Std. Err. & 0.0171 & 0.0165 & 0.0746 & 0.0186 & 0.0087 & 0.0249 \\
\bottomrule
\end{tabular}%
    \caption{Mean and standard error (Std. Err.) values of ARI for the posterior probabilities for Gaussian (N), t, Cauchy (C), Laplace (L), generalized hyperbolic (gh) and variance gamma (vg) with $T=1000$ for $K = 2$ and $K = 3$.}
  \label{tab:ARI}%
\end{table}%

 \FloatBarrier
 
\noindent{\textit{Scenario 3:}}  Finally, in order to evaluate the ability of the HMGHGM to correctly retrieve the graph structure in each latent state, following \cite{finegold2011robust}, we draw a state-specific random $10 \times 10$ sparse matrix $\bs \Theta_k = \bs \Sigma_k^{-1}$ according to the following procedure:
\begin{itemize}
	\item Choose each lower-triangular element of $\bs \Theta_k$ independently to be $-1,0,1$ with probability $15\%, \ 70\%$ and $15\%$, respectively;
	\item For $j > h$ set $\Theta_{k,h,j} = \Theta_{k,j,h}$;
	\item To ensure positive definiteness of $\bs \Theta_k$, set $\Theta_{k,i,i} = 1 +h$, with $h$ being the number of nonzero elements in the $i$th row of $\bs \Theta_k$.
\end{itemize}
As in \cite{finegold2011robust}, to strengthen the partial correlations in $\bs \Theta_k$ we reduce its diagonal elements by fixing a minimum eigenvalue of $0.6$. We then sample $T = 1000$ observations from the data generating models described above (see Table \ref{tab:DGM}) for a one ($K = 1$) and two ($K = 2$) state HMM. For each of the six distributions, we run the proposed HMGHGM on an equispaced grid of $50$ values for the tuning parameter $\rho$ from $0.01$ to $0.9$ with the weight $\nu_k = 1/K$, $k=1,\dots,K$, assigning the same penalty for all states. For each model, we carry out $50$ Monte Carlo replications and assess how well different distributions of the GH family recover the true edges in each state by reporting ROC curves averaged over the $50$ replicates and latent states.
 From left to right, Figure \ref{fig:ROCs} shows the ROC curves, where we plot the true positives rate against the false positives rate for each value of $\rho$, under the multivariate Gaussian, t, Cauchy, Laplace, generalized hyperbolic and variance gamma multivariate distributions. Overall, these results suggest that when we deal with non-Gaussian heavy tailed data, our static and dynamic graphical models have the same performance of the graphical lasso, representing a robust alternative to existing static approaches  and a novelty for dynamic graphical models in non-Gaussian settings.

 \begin{figure}[h!]
\begin{center}
  \includegraphics[scale=.6]{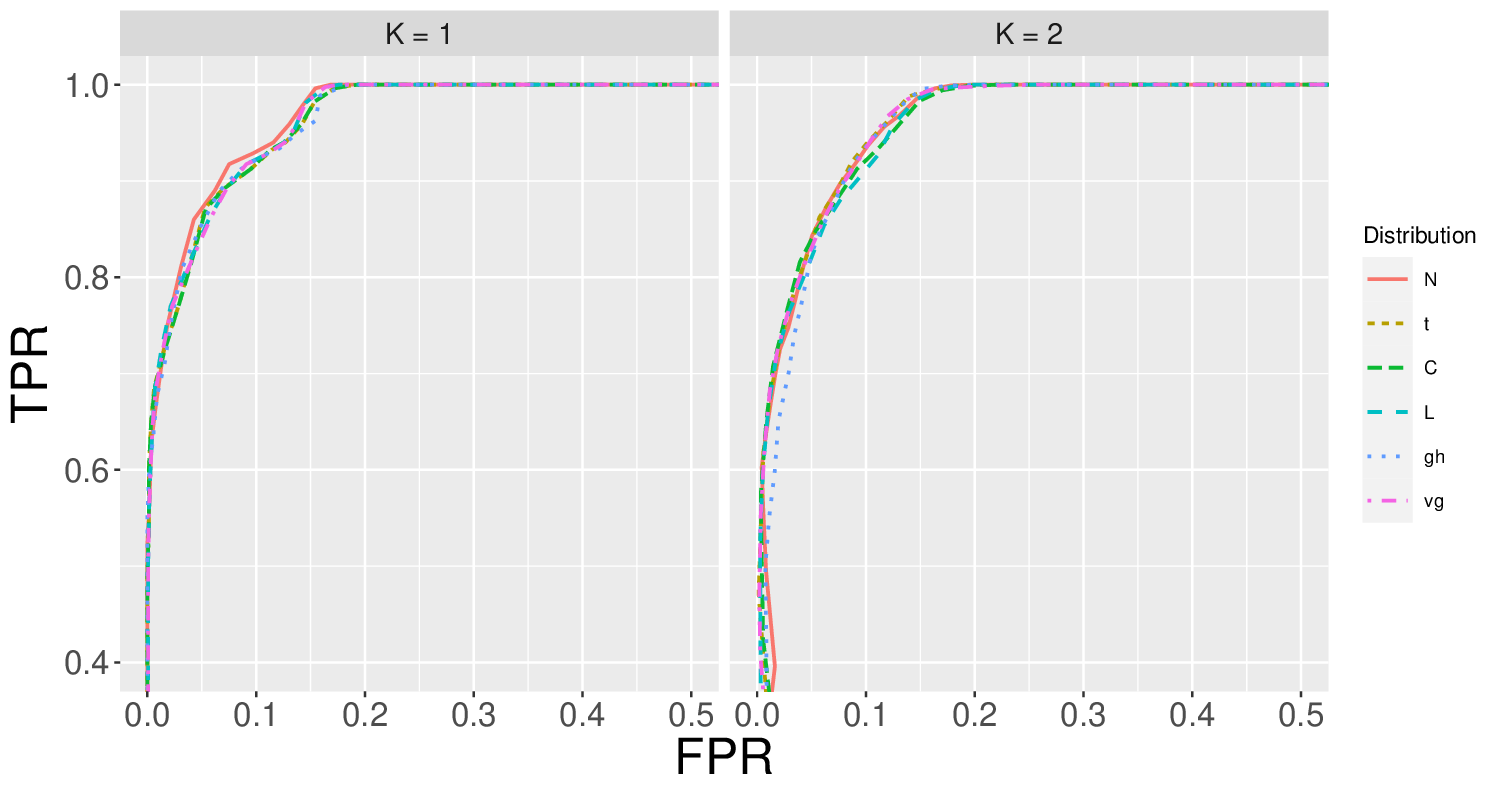}
\caption{ROC curves for the HMGHGM model under the multivariate Gaussian (N), t, Cauchy (C), Laplace (L), generalized hyperbolic (gh) and variance gamma (vg) distributions for $K = 1$ (left) and $K = 2$ (right) states. Each curve is averaged over 50 Monte Carlo replications. For the two-state model TPR and FPR are obtained as averaged values over the two states.} 
\label{fig:ROCs}
\end{center}
\end{figure}

\FloatBarrier

\section{Empirical Application}\label{sec:app}
\subsection{Data Description}\label{subsec:data}
Over the last decades, hedge fund managers and regulators highlighted the need to accurately assess contagion and systemic risk, depending on the situation of the economy. 
In this paper, we are interested in investigating how the degree of network connectivity across different asset classes changes in severe market turbulence, which can possibly threaten the integrity of the financial system and represent a potential source of financial instability. For this purpose, we apply the proposed methodology to daily returns of $d = 29$ financial assets comprising stock market indices, commodity futures and digital currencies. The set of considered assets includes the 10 largest cryptocurrencies in terms of market capitalization, including Bitcoin, Ethereum, Ripple and Stellar, the 10 most exchanged metal and energy commodities, such as Gold, Silver, Crude Oil and Natural Gas, and the 9 major world stock indexes, among whom we include the S\&P 500, Ftse Mib, Nikkei 225 and the Shangai Composite Index.

The sample dataset is collected from Yahoo Finance and the study period starts on November 10, 2017 and ends on September 28, 2023, for a total of $T = 1250$ observations after removing missing data. The considered timespan is marked by numerous crises that may have impacted cross-market correlation patterns, spanning from the cryptocurrency bubble crisis in 2017-2018 to the global crisis sparked by the COVID-19 pandemic in 2020-2021, which have caused unprecedented levels of uncertainty and risk. 
 Daily returns with continuous compounding are calculated taking the logarithm of the difference between closing prices in consecutive trading days. In Table \ref{tab:stats} we report the list of examined variables and the summary statistics for the whole sample. As one can see, each asset displays the typical stylized facts that characterize the financial markets,
 and cryptocurrencies are generally much more volatile than commodity futures and stock market indices, having the highest standard deviations. Further, the Jarque-Bera test strongly rejects the normality assumption of daily returns for all series considered. Such findings reflect the large price variations, occurred during the period under consideration, which were triggered by the burst of the cryptocurrency bubble in 2018 and by the COVID-19 outbreak at the dawn of 2020. In concluding, the Augmented Dickey-Fuller (ADF) test (\citealt{dickey1979distribution}) shows that all daily returns are stationary at the $1\%$ level of significance.
 Figure \ref{fig:prices} shows the daily prices of
the 29 assets over the entire period of observation. At first glance we immediately recognize the high volatility that characterizes the market of the cryptocurrencies and the waves of exponential price increases. The first wave started at the end of 2018, the first big \qmo bubble\qmcsp that brought BTC above 20 thousands dollars, another one in mid-2019, and the most recent and bigger one started in late 2020, with a development over the entire course of 2021. As regards commodities and stock indexes, we observe quite a different path with respect to the digital currencies. The collapse of the financial markets at the beginning of 2020 caused by the COVID-19 pandemic represents a watershed for volatility regimes. After March 2020 we see an increase in price volatility for a large proportion of commodities and stock indexes. We also observe price peaks of Gasoline, Crude Oil and Brent during the course of 2022 caused by the sanctions imposed against Russia in response to the Ukrainian war.

\begin{figure}[!t]
\centering
\includegraphics[width=1\linewidth]{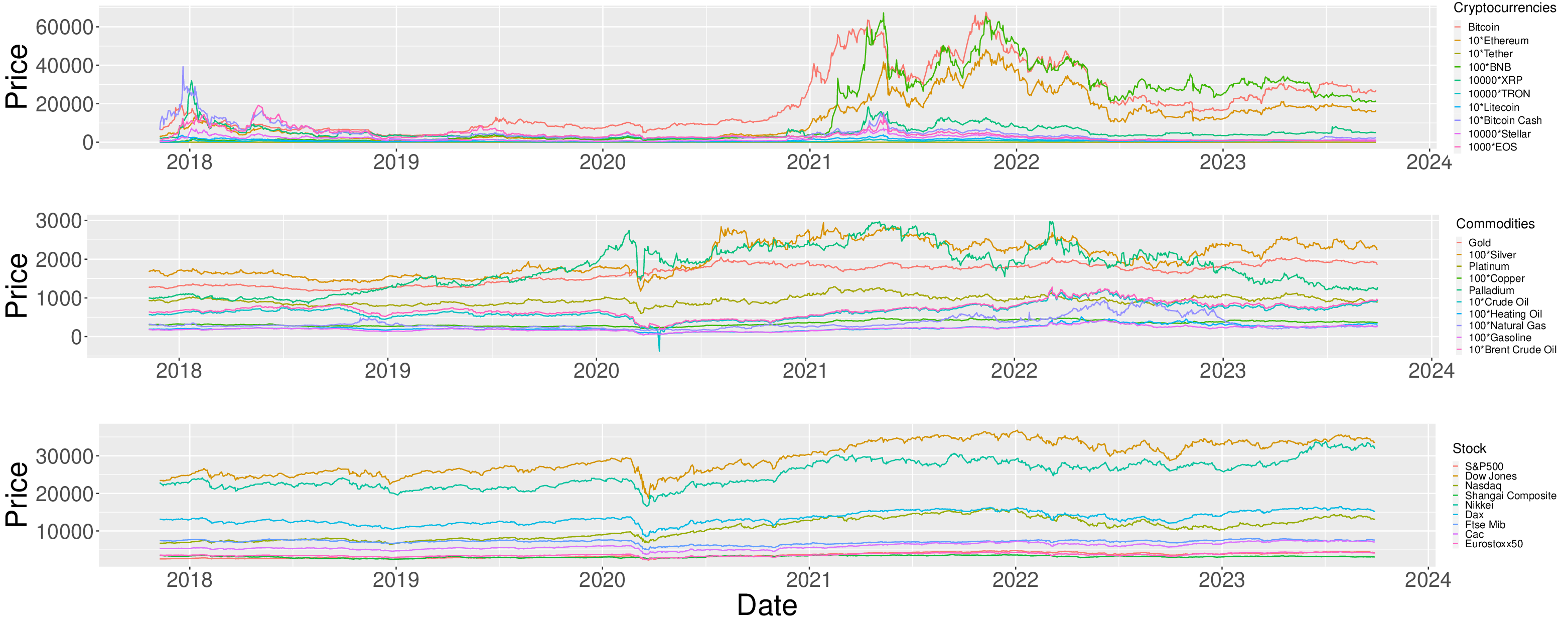}
\caption{From top to bottom daily time-series of prices of the assets considered. Observations span from November 10, 2017 to September 28, 2023.} 
\label{fig:prices}
\end{figure}

 \FloatBarrier

\begin{table}[h]
  \centering
  \scalebox{0.65}{
\begin{tabular}{rlrrrrrrrr}
\multicolumn{1}{l}{Sector} & Variable & Minimum & Mean  & Maximum & Stdev & Skewness & Kurtosis & Jarque-Bera test & ADF test \\
\midrule
      & 1: Bitcoin & -46.47 & 0.11  & 22.51 & 5.03  & -0.76 & 9.01  & \textbf{4343.7} & \textbf{-10.02} \\
      & 2: Ethereum & -55.07 & 0.13  & 32.5  & 6.47  & -0.66 & 7.73  & \textbf{3203.26} & \textbf{-10.09} \\
      & 3: Tether & -5.26 & 0     & 5.66  & 0.48  & 0.77  & 49.86 & \textbf{129594.27} & \textbf{-14.98} \\
      & 4: BNB & -54.31 & 0.38  & 84.62 & 7.46  & 1.19  & 22.42 & \textbf{26474.64} & \textbf{-9.19} \\
\multicolumn{1}{l}{Cryptocurrency} & 5: XRP USD & -69.26 & 0.07  & 74.65 & 8.21  & 1.13  & 20.88 & \textbf{22976.08} & \textbf{-9.9} \\
      & 6: TRON & -69.21 & 0.29  & 174.43 & 9.7   & 5.14  & 94.96 & \textbf{475117.14} & \textbf{-8.37} \\
      & 7: Litecoin & -44.91 & 0     & 53.98 & 6.88  & 0.18  & 8.55  & \textbf{3811.58} & \textbf{-10.22} \\
      & 8: Bitcoin Cash & -56.13 & -0.08 & 50.38 & 8.01  & 0.21  & 9.13  & \textbf{4354.43} & \textbf{-10.66} \\
      & 9: Stellar & -42.96 & 0.08  & 80.74 & 8.06  & 1.9   & 20.88 & \textbf{23461.62} & \textbf{-9.94} \\
      & 10: EOS & -50.42 & -0.06 & 65.12 & 8.18  & 0.33  & 8.78  & \textbf{4041.24} & \textbf{-10.37} \\
\midrule
      & 11: Gold & -5.11 & 0.03  & 5.81  & 1.01  & -0.22 & 4.68  & \textbf{1150.36} & \textbf{-10.5} \\
      & 12: Silver & -16.08 & 0.03  & 14.71 & 2.02  & -0.38 & 9.08  & \textbf{4324.41} & \textbf{-10.25} \\
      & 13: Platinum & -12.35 & 0     & 11.18 & 1.99  & -0.27 & 4.07  & \textbf{876.45} & \textbf{-11.32} \\
      & 14: Copper & -8.14 & 0.02  & 6.91  & 1.55  & -0.3  & 2.26  & \textbf{283.32} & \textbf{-10.93} \\
\multicolumn{1}{l}{Commodities} & 15: Palladium & -23.4 & 0.03  & 22.6  & 2.73  & -0.43 & 10.5  & \textbf{5782.71} & \textbf{-12.38} \\
      & 16: Crude Oil & -28.22 & 0.09  & 42.31 & 3.57  & 0.99  & 33.4  & \textbf{58294.49} & \textbf{-8.87} \\
      & 17: Heating Oil & -26.14 & 0.06  & 15.42 & 2.91  & -1.21 & 13.78 & \textbf{10200.87} & \textbf{-11.36} \\
      & 18: Natural Gas & -30.05 & -0.01 & 38.17 & 4.34  & 0.17  & 7.92  & \textbf{3270.09} & \textbf{-9.74} \\
      & 19: RBOB Gasoline & -50.89 & 0.05  & 28.88 & 3.46  & -2.51 & 48.5  & \textbf{123801.84} & \textbf{-9.23} \\
      & 20: Brent Crude Oil & -27.58 & 0.06  & 27.42 & 2.88  & 0.09  & 18.14 & \textbf{17149.08} & \textbf{-9.81} \\
\midrule
      & 21: S\&P500 & -12.77 & 0.04  & 8.97  & 1.38  & -0.94 & 13.81 & \textbf{10115.35} & \textbf{-10.1} \\
      & 22: Dow Jones & -13.84 & 0.03  & 10.76 & 1.38  & -1.12 & 18.85 & \textbf{18773.32} & \textbf{-10.4} \\
      & 23: Nasdaq & -13.15 & 0.06  & 8.93  & 1.62  & -0.62 & 7.07  & \textbf{2682.89} & \textbf{-10.1} \\
      & 24: Shangai Composite Index & -8.04 & -0.01 & 7.55  & 1.17  & -0.39 & 4.98  & \textbf{1324.96} & \textbf{-11.33} \\
\multicolumn{1}{l}{Stock Indexes} & 25: Nikkei & -6.27 & 0.03  & 7.73  & 1.27  & -0.01 & 3.48  & \textbf{630.63} & \textbf{-11.02} \\
      & 26: Dax & -13.05 & 0.01  & 10.41 & 1.37  & -0.56 & 12.89 & \textbf{8715.23} & \textbf{-10.25} \\
      & 27: Ftse Mib & -11.51 & 0     & 8.67  & 1.15  & -1.05 & 14.21 & \textbf{10741.41} & \textbf{-10.59} \\
      & 28: Cac & -13.1 & 0.02  & 8.06  & 1.35  & -0.92 & 12.45 & \textbf{8254.22} & \textbf{-10.56} \\
      & 29: Eurostoxx50 & -13.24 & 0.01  & 8.83  & 1.35  & -0.88 & 12.75 & \textbf{8630.23} & \textbf{-10.45} \\
\bottomrule
\end{tabular}%
    }
        \caption{Summary statistics of daily log-returns over the entire period. The Jarque-Bera and ADF test statistic is displayed in boldface when the null hypothesis is rejected at the 1\% significance level.}
  \label{tab:stats}%
\end{table}%
  \FloatBarrier

\subsection{Results}\label{subsec:results} 
Following these considerations, 
 the proposed HMGHGM can provide insights into the temporal evolution of the conditional correlations of asset returns, where the assets' multivariate distribution is not constrained to be Normal but can assume any fat-tailed distribution within the GH family.
  \\
As a first step of the empirical analysis, in order to select the optimal $\rho$ in \eqref{eq:completelpen} and number of latent states $K$, we fit the proposed HMGHGM for a sequence of 300 values of $\rho$ for $K = \{1,\dots,4\}$. 
 We also include a state-specific penalty term by setting the weight $\nu_k$ in \eqref{eq:completelpen} as the (scaled) effective sample size of state $k$, i.e., $\nu_k = \sum_{t=1}^T \gamma_t (k) / T$, where $\gamma_t (k)$ is defined in \eqref{gamma}. To select the best pair $(K, \rho)$, we consider two model selection criteria, namely a Bayesian Information Criterion (BIC) type index and the Mixture Minimum Description Length (MMDL, \citealt{figueiredo1999fitting}) defined, respectively, as

\be\label{eq:BIC}
\mbox{BIC}_{K, \rho} = -\ell(\bs \theta| \bs y_1,\dots,\bs y_T) + \frac{1}{2}\log(T)K(K-1) + \frac{1}{2}\log(T)\sum_{k=1}^K Df(k, \rho)
\ee
\be\label{eq:MMDL}
\mbox{MMDL}_{K, \rho} = -\ell(\bs \theta| \bs y_1,\dots,\bs y_T) + \frac{1}{2}\log(T)K(K-1) + \sum_{k=1}^K \frac{1}{2}\log(T \nu_k) Df(k, \rho),
\ee
where $\ell(\bs \theta| \bs y_1,\dots,\bs y_T)$ denotes the observed log-likelihood and we set the degrees of freedom as $Df(k, \rho) = d + \sum_{l \geq l'} \bs 1_{\bs \Theta_{k,\rho}^{ll'} \neq 0}$.
 The two criteria are identical, with the only difference that the MMDL adjusts the penalty term for the effective sample size $T \nu_k$ of each state $k$. 
 The optimal $K$ and $\rho$ are obtained by selecting the model with the lowest criterion value. 
To compare models with differing number of states, Table \ref{tab:criteria} reports the BIC and MMDL values for each $K$ at the optimal $\rho$. As can be seen, the BIC selects 2 states, while MMDL chooses $K = 3$ at the optimal $\rho$. Following the works of \cite{figueiredo1999fitting} and \cite{stadler2013penalized}, demonstrating that, in general, the MMDL outperforms the BIC, we thus select the proposed HMGHGM with $K=3$ states for our analyses.

\begin{table}[htbp]
  \centering
\begin{tabular}{lrrrr}
      & $K = 1$ & $K = 2$ & $K = 3$ & $K = 4$ \\
\midrule
BIC & 148657.1 & \textbf{141338.4} & 141663.0 & 143774.4 \\
MMDL & 148657.1 & 140635.4 & \textbf{140090.8} & 141096.8 \\
\bottomrule
\end{tabular}%
    \caption{BIC and MMDL values corresponding to the optimal $\rho$ with varying number of states. Bold font highlights the best values for the considered criteria (lower-is-better).}
  \label{tab:criteria}%
\end{table}%

\FloatBarrier

In light of these comments, we begin commenting on the results by the hidden process. The top of the Figure \ref{fig:posteriorprobs} represents the estimated posterior probabilities of being in latent state $j$, at time $t$, where $j = 1,...,K$ and $t = 1,...,T$, conditional on the observed time-series and with over-imposed trend according to a smoothed local regression. These probabilities allow us to make inference on the latent process and to describe the different market phases of the assets considered. 
 The bottom Figure \ref{fig:posteriorprobs} shows the decoded states obtained through local decoding by considering the maximum of the posterior probabilities and the Viterbi algorithm. The two methods convey almost identical results. The predicted trajectories indicate that the three states are visited the $38\%, 39\%$ and $23\%$ of the entire period, respectively.
 Overall, the following conclusions can be drawn. The distribution of states closely mirrors the dynamics of the considered portfolio during the last five years. In particular, State 3 manages to capture the speculative waves of the cryptocurrency market at the end of 2017, in mid-2019, and at the most recent surge during 2021. As can be observed in Figure \ref{fig:prices}, these periods are characterized by extreme price movements with exponential behaviors and can represent subsequent dramatic losses for the investors. State 1 and 2 instead succeed in mimic the behavior of commodities and stock markets during the considered time frame. State 1, which identifies the period that goes from the end of 2017 to the collapse of financial markets of March 2020 induced by the COVID-19 pandemic, is related to a phase of low prices. State 2 instead indicates the phases of the markets that, after the second half of 2020, are characterized by rise in prices. Overall, Figure \ref{fig:posteriorprobs} suggests us that the proposed model efficiently manages to identify high and low price and volatility trends that alternate over the years.\\

In order to summarize the interconnectedness among the considered assets during different market conditions, we build the regime-specific graphs $\widehat{G}_k = (V, \widehat{E}_k)$, $k=1,\dots,K$, where the variables in Table \ref{tab:stats} represent the vertices in $V$, identified by the corresponding estimated precision matrix $\widehat{\bs \Theta}_k$, $k=1,\dots,K$. Specifically, an edge between two nodes is created in the $k$-th graph if the estimated $\widehat{\Theta}_{k,i,l} \neq 0$, i.e., $(i,l) \in \widehat{E}_k$ if and only if $\widehat{\Theta}_{k,i,l} \neq 0$, for $i,l \in V, i \neq l$. Figures \ref{sfig:graphK1}, \ref{sfig:graphK2} and \ref{sfig:graphK3} report the estimated graphs for the selected three-states HMGHGM.
 To highlight the most important variables in the network, pies in light blue indicate the degree centrality measure of each node, while the edge colors specify the sign of the corresponding interaction (green = positive, red = negative). 
 In each graph, stock market indexes are coloured in yellow, cryptocurrencies in grey while commodities are shown in red, and the vertex labels are reported in Table \ref{tab:stats}. By looking at Figure \ref{sfig:graphK3}, each graph identifies a different regime of network connectivity consistently with the remarks given above. We identify a high asymmetry in network connectivity: being State 3 the representation of the speculative waves of the cryptocurrency market, the third graph shows, as expected, a strong centrality and clustering pattern of the crypto assets. The most central nodes (BNB, TRON, Litecoin, Stellar) indeed represents some of the assets with the most \qmo explosive\qmcsp trends during the crypto bubble. The strong clustering pattern is present across the three estimated graphs. Specifically, digital currencies are strongly tied to each other and rather isolated from other assets (\citealt{dyhrberg2016bitcoin, bouri2017hedge, corbet2018exploring, giudici2021crypto}). It emerges how they might act as safe-haven assets offering protection to investors against losses to offset market risk, being uncorrelated with stocks during market up and downturns. 
 In all three networks, Bitcoin is one of the least central node among the cryptos. This is in line with the work of \cite{bouri2019co}, which explains that the behaviour of crypto-traders is not to anchor to a price level: after an exponential price appreciation in one cryptocurrency (e.g., Bitcoin), they start to look for more attractive cryptocurrencies that have a better risk-reward profile. One example is represented by Stellar (9) and Litecoin (7), which result always very central in each state of the chain.
 \\
 On the other hand, volatility of commodities and stock indexes is closely related to the uncertainty of the economic outlook.  
 As one can see, in the first two states of the chain the most central nodes are represented by stocks (in particular Dow Jones, Nasdaq, Dax and Eurostoxx50) and commodities (i.e. Platinum, Copper, Crude Oil and Gasoline), who have been particularly affected by the interruptions in global production chains over the course of the pandemic. 
 We detect strong clustering patterns for all the three sectors considered. In particular we highlight the strongest connections within metals, energy, European and US indexes, representing possible co-movements during both upward and downward trends, probably due to the fact that assets of the same sectors or indexes of the same regions have responded similarly to the economic crisis caused by the pandemic.
  Finally, we highlight the role of the Gold as a safe haven asset, being one of the most disconnected nodes in the graph for every volatility level.

  \begin{figure}[!h]
\centering
  \includegraphics[width=1\linewidth]{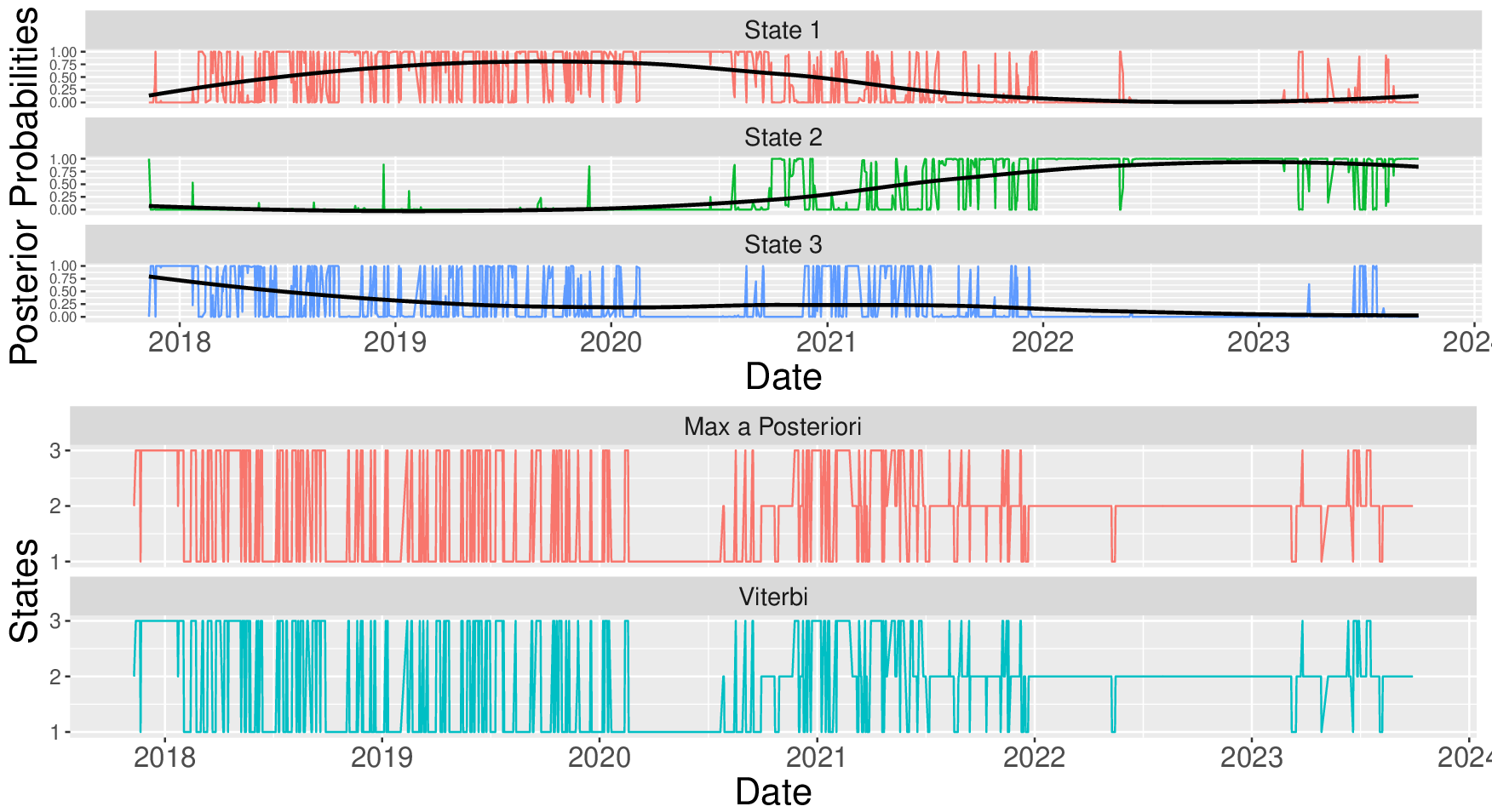}
\caption{From top to bottom, predicted posterior probabilities and predicted sequence of hidden states over time with $K = 3$ states. The over-imposed black lines denote the trend estimated by a local linear regression.}
\label{fig:posteriorprobs}
\end{figure}


 \begin{figure}[!t]
\centering
\begin{subfigure}{.5\textwidth}
  \centering
  \includegraphics[angle = 270, width=1\linewidth]{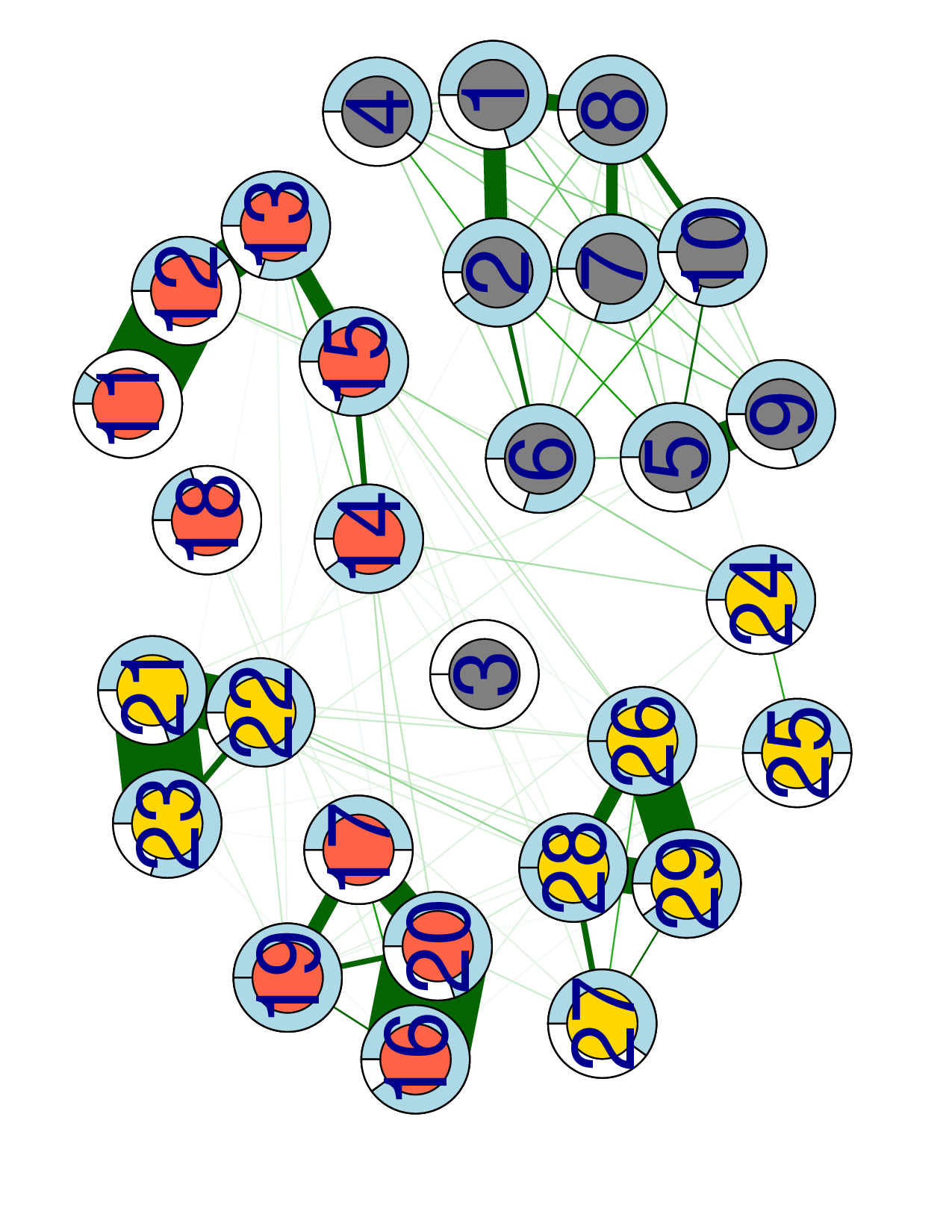}
  \caption{$\widehat{G}_{1}$}
  \label{sfig:graphK1}
\end{subfigure}%
\begin{subfigure}{.5\textwidth}
  \centering
  \includegraphics[angle = 270, width=1\linewidth]{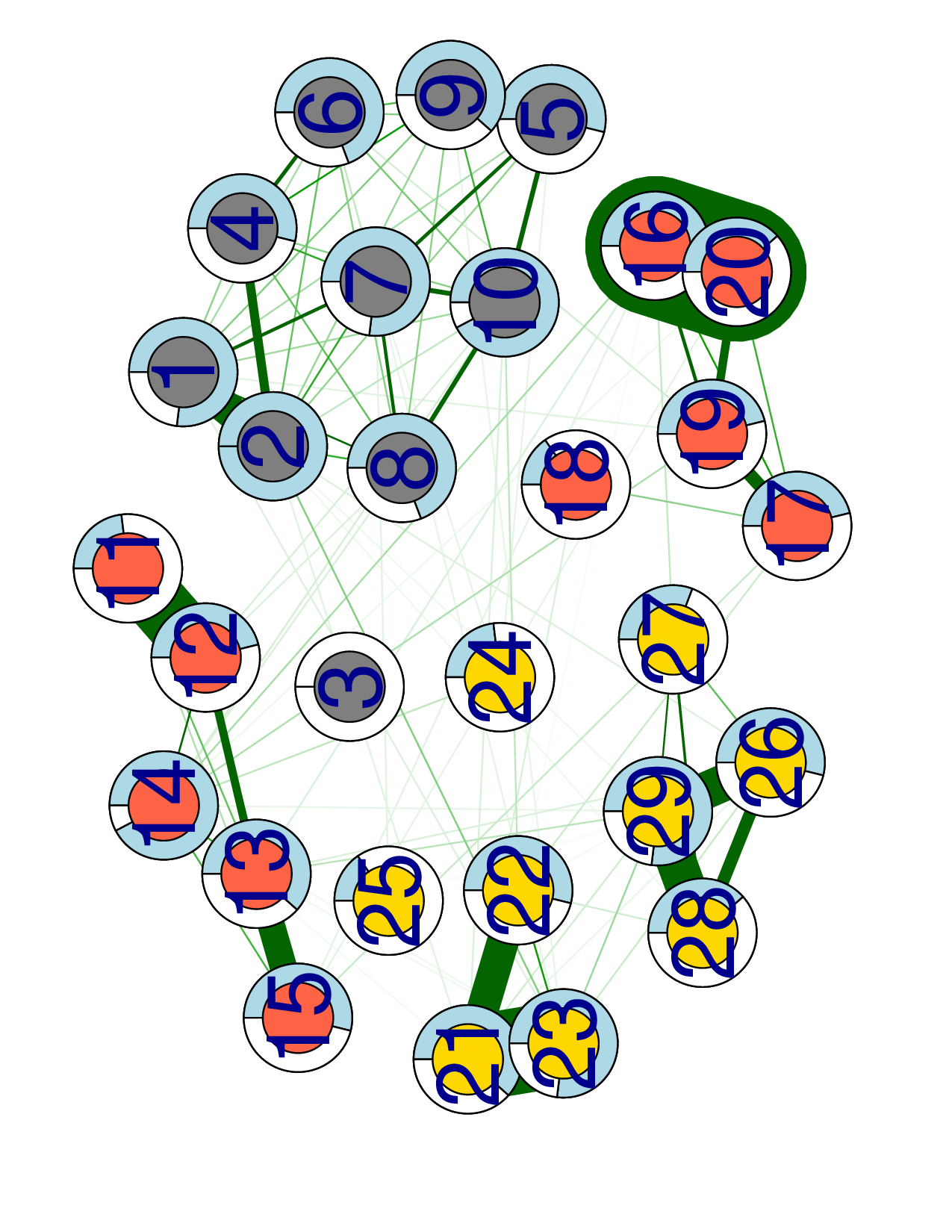}
  \caption{$\widehat{G}_{2}$}
  \label{sfig:graphK2}
\end{subfigure}\\%
\begin{subfigure}{.5\textwidth}
  \centering
  \includegraphics[angle = 270, width=1\linewidth]{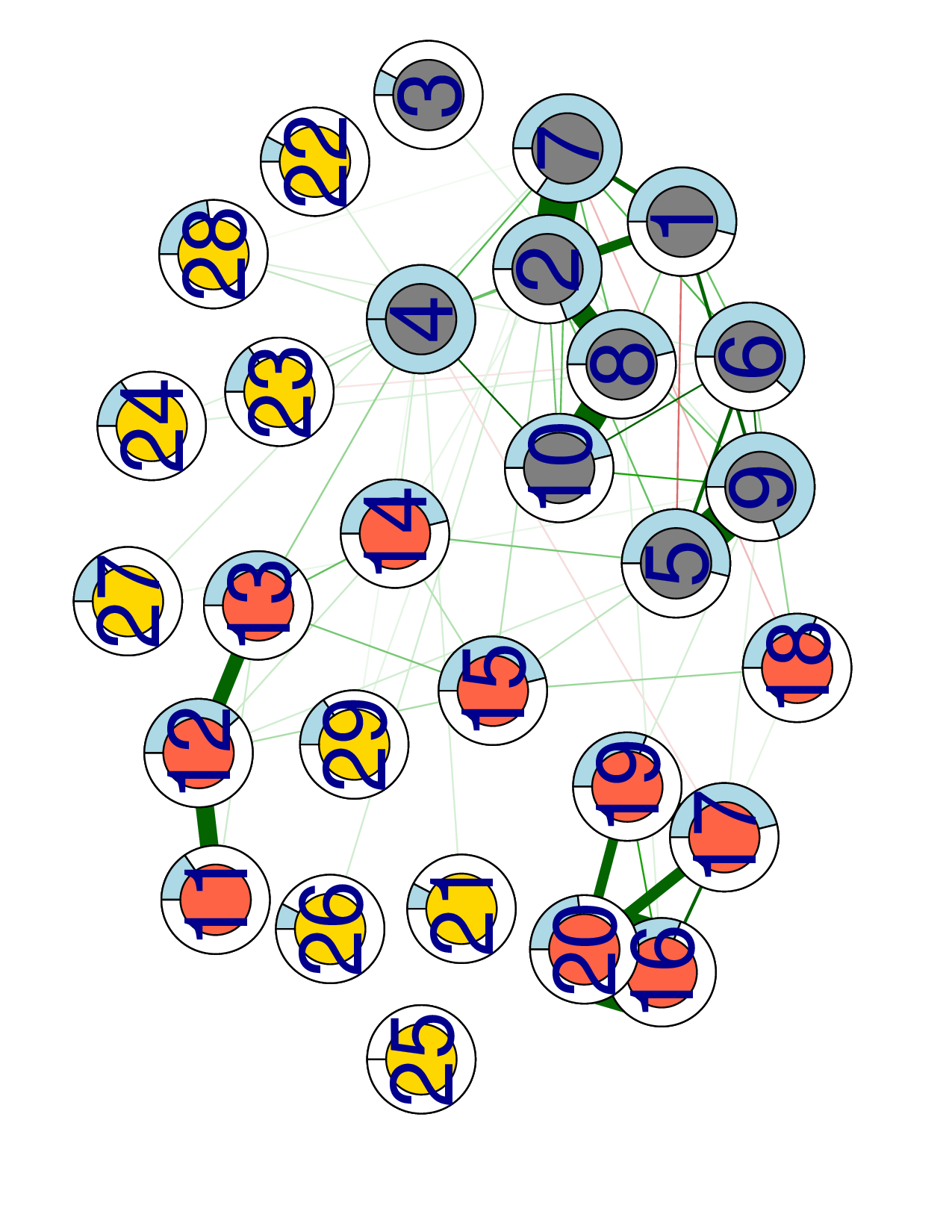}
  \caption{$\widehat{G}_{3}$}
  \label{sfig:graphK3}
\end{subfigure}
\begin{subfigure}{.35\textwidth}
  \centering
  \includegraphics[width=0.6\linewidth]{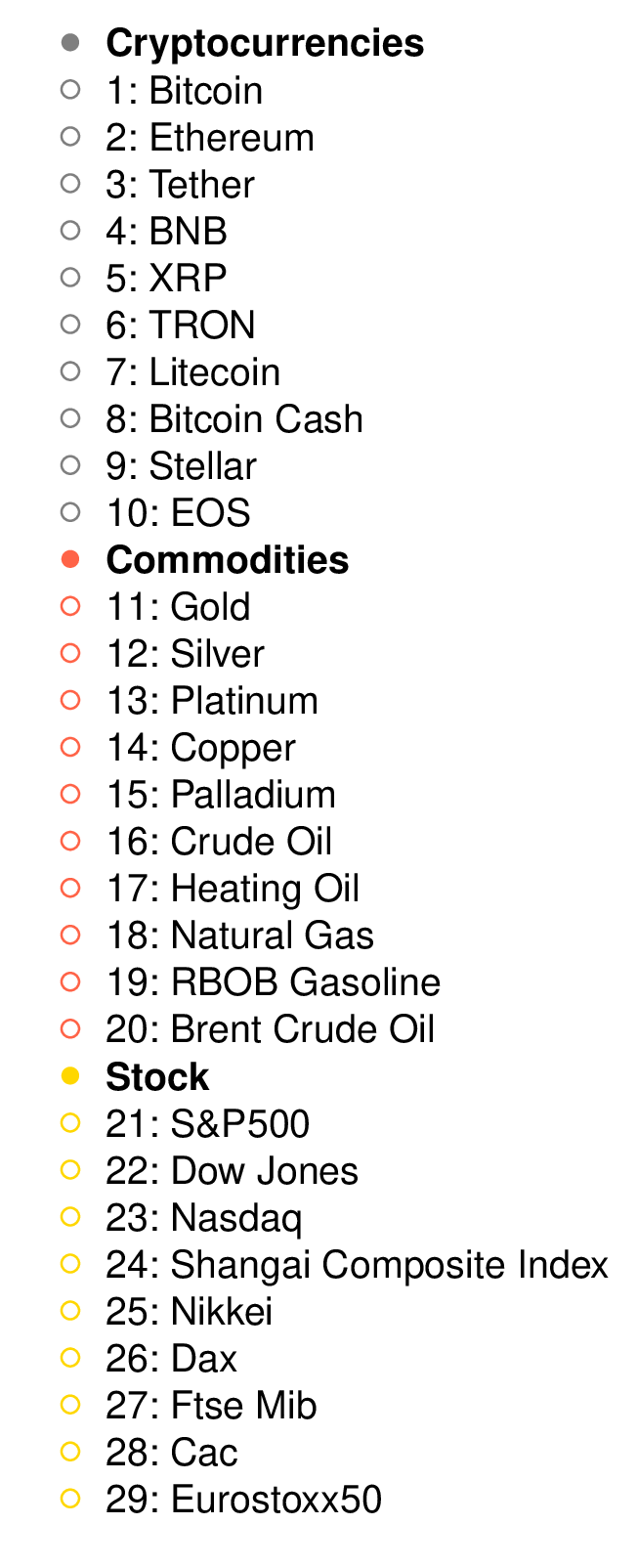}
  \caption*{}
  \label{sfig:legend}
\end{subfigure}
\caption{\small Estimated graphs for $k = 1$ (a), $k = 2$ (b) and $k = 3$ (c). Yellow, grey and red nodes represent respectively indexes, cryptocurrencies and commodities while the vertex labels are illustrated in Table \ref{tab:stats}. Green edges in the networks depict positive associations, while red edges represent negative associations. Pies in light blue indicate degree centrality.}
\label{fig:graph}
\end{figure}

\FloatBarrier

\section{Conclusions}\label{sec:concl}
This paper proposes a new sparse dynamic graphical model to study time-varying conditional correlations between important commodities, cryptocurrencies and stock indexes in a period of high volatility and market turbulence, without relying on the assumption of normally distributed data. Starting from \cite{finegold2011robust} and \cite{stadler2013penalized}, we introduce a hidden Markov graphical model where, conditional on the latent state, emission densities follow a (symmetric) GH distribution with state-dependent parameters. Since the index and concentration parameters are completely free to vary within the GH family they allow us to accommodate well a wide range of empirical characteristics of the data in real applications. To induce sparsity and recover only the most prominent relations, we develop a Penalized ECME algorithm exploiting the Gaussian location-scale mixture representation of the GH distribution with a Lasso $L_1$ penalty on the elements of the state-specific precision matrices. Our procedure is computationally efficient and can be implemented by making use of the \textit{glasso} of \cite{friedman2008sparse} within the conditional M-step of the proposed algorithm. The good performance of our model is evaluated using numerical simulations under different scenarios. 

In the real data analysis we considered daily returns of 29 cryptocurrencies, commodities and market indexes from November 2017 to September 2023 to investigate the dynamics of conditional correlation structure across different financial markets during low and high volatility periods. According to the MMDL criteria, we identified three latent regimes corresponding to different degrees of network connectivity.
 The first two regimes show negative and positive phases of the market, respectively, while the third manages to capture the peculiar explosive trends of the cryptocurrencies. The estimated graphs are consistent with the hypothesis that cryptocurrencies are highly connected to each other and disconnected from traditional asset types \citep{bouri2017hedge, corbet2018exploring, giudici2021crypto}, resulting in possible safe-haven assets during tumultuous times. Commodities that stand out in terms of degree centrality are the ones that have been the most affected by the interruptions in global production chains over the course of pandemic, namely Palladium, Copper, Crude Oil and Gasoline. 

 
 Outside the financial world, for example, a similar methodology could be employed in the field of dynamic mixed graphical models, that could convey critical information when employing not just continuous but also discrete datasets.
Future research may also extend the proposed graphical model by using multivariate asymmetric distributions within the GH family.

\FloatBarrier
\bibliographystyle{agsm}
\bibliography{biblioGH.bib}       

\end{document}